\newcommand{\nodes}{\mathcal{N}}
\newcommand{\remove}[1]{}
\newcommand{\PR}[1]{\ensuremath{{\mathbf{Pr}\left[#1\right]}}}
\newcommand{\cF} {\ensuremath{\mathcal F}}
\newcommand{\cE} {\ensuremath{\mathcal E}}
\newcommand{\cI} {\ensuremath{\mathcal X_i}}
\newcommand{\cJ} {\ensuremath{\mathcal E_i}}
\newcommand{\cY}{\ensuremath{\mathcal Y_j}}
\newcommand{\cV} {\ensuremath{\mathcal E_j^v}}
\newcommand{\hcE} {\ensuremath{\bar{\mathcal E}}}
\newcommand{\hcF} {\ensuremath{\bar{\mathcal F}}}
\newtheorem{defn}{Definition}
\newtheorem{fact}{Fact}
\newtheorem{prop}{Proposition}
\newtheorem{observation}{Observation}
\newcommand{\bE}{{\mathbb E}}
\newcommand{\bP}{{\mathbb P}}
\newif\ifcomment\commentfalse
\def\commentOFF{\commentfalse}
\long\outer\def\bc#1\ec{{\ifcomment \sloppy  $[${\bf Alberto suggests}]
{{#1}} \textbf{[end]} \fi }}
\long\outer\def\br#1\er{{\ifcomment \sloppy  $[${\bf Alberto suggest remove}]
{{#1}} \textbf{[end]} \fi }}
\long\outer\def\bo#1\eo{{\ifcomment \sloppy  $[${\bf instead of}]
{\textit{#1}} \textbf{[end]}  \fi }}
\long\outer\def\BC#1\EC{{\ifcomment \sloppy \par \#  \dotfill
{\textsc{#1}} \dotfill \# \par \fi }}
\begin{document} 

\title{Social-Aware Forwarding Improves Routing Performance in Pocket Switched Networks\thanks{A short version of this paper appeared in the Proceedings of the European Symposium on Algorithms (ESA) 2011.}}

\author{Josep D\'{i}az \and Alberto Marchetti-Spaccamela \and
 Dieter Mitsche \and Paolo Santi \and Julinda Stefa}
 
 \institute{Josep D\'{i}az\at
              Universitat Politecnica de Catalunya \\
              Jordi Girona Salgado 1-3, E-08034 Barcelona -- SPAIN\\
              \email{diaz@lsi.upc.edu} \and
              Alberto Marchetti-Spaccamela\at
              Sapienza Universit\'a di Roma \\
              Via Ariosto 25, 00185 Roma -- ITALY\\
              \email{alberto@dis.uniroma1.it} \and
              Dieter Mitsche\at
              Ryerson University \\
              350 Victoria Street, M5B2K3 Toronto -- Canada\\
              \email{dmitsche@ryerson.ca} \and     
              Paolo Santi\at
              IIT - CNR \\
              Via G. Moruzzi 1, 56124 Pisa -- ITALY\\
              Tel.: +39-050-3152411\\
              Fax: +39-050-3152333\\
              \email{paolo.santi@iit.cnr.it}  \and
              Julinda Stefa \at
              Sapienza Universit\'a di Roma\\
              Via Salaria 113, 00198 Roma -- ITALY\\
               \email{stefa@di.uniroma1.it}         
 }
\date{Received: date / Accepted: date}


\maketitle

\begin{abstract}
In this paper, we analyze the performance of two routing protocols for opportunistic networks, which are representative of social-oblivious and social-aware forwarding. In particular, we derive bounds on the expected message delivery time for a recently introduced stateless, social-aware forwarding protocol using interest similarity between individuals, and the well-known BinarySW protocol. 
We compare both from the theoretical and  experimental point of view the asymptotic performance of Interest-Based (IB) forwarding and BinarySW under two mobility scenarios, modeling situations in which pairwise meeting rates between nodes are either {\em independent of} or {\em correlated to} the similarity of their interests. 
We formally prove that, under the assumption that sender and destination of a message have orthogonal interests, IB forwarding provides asymptotically better performance than BinarySW with interest-correlated mobility. 
In fact, in this situation BinarySW yields {\em unbounded} expected delivery time, compared to {\em bounded} expected delivery time yielded by IB forwarding. On the other hand, when mobility of nodes is independent of their interests, the two forwarding approaches provide the same asymptotic performance. Our theoretical results are qualitatively confirmed by a simulation-based evaluation based on both a real-world trace and a synthetic (but realistic) mobility model. The analysis is then extended to consider less pessimistic hypothesis on similarity of sender and destination interests, to a model where the sender knows the ID of the destination but not its interests, and to forwarding approaches where multiple copies of the messages can travel along hop-bounded paths to destination. 
\keywords{opportunistic networks \and pocket switched networks \and forwarding strategies \and social-aware forwarding \and routing \and asymptotic performance evaluation}

\end{abstract}

\section{Introduction}
Opportunistic networks, in which occasional communication opportunities between pairs or small groups of nodes are exploited to circulate messages, are expected to play a major role in next generation short range wireless networks  \cite{Resta,Spyro,Spyro2}. In particular, pocket-switched networks (PSNs) \cite{Hui3}, in which network nodes are individuals carrying around smart devices with direct wireless communication links, are expected to become widespread in a few years.
Message exchange in opportunistic networks is ruled by the {\em store-carry-}and{\em-forward} mechanism typical of delay-tolerant networks \cite{Fall}:  a node (either the sender, or a relay node) stores the message in its buffer and carries it around, until a communication opportunity with another node arises, upon which the message can be forwarded to another node (the destination, or another relay node). 

Given this basic forwarding mechanism, a great deal of attention has been devoted in past years to optimize the forwarding policy of routing protocols. \remove{More specifically, upon occasional encounter between two network nodes $A$ and $B$, the forwarding policy is in charge of deciding, for each message $M$ stored in $A$'s buffer, whether to copy $M$ into $B$'s buffer (and viceversa). }
Recently, several authors have proposed optimizing forwarding strategies for PSNs based on the observation that, being these networks composed of {\em individuals} characterized by a collection of social relationships, these social relationships can actually be reflected in the meeting patterns between network nodes. Thus, knowledge of the social structure underlying the collection of individuals forming a PSN can be exploited to optimize the routing  strategy, e.g., favoring message forwarding towards ``socially well connected" nodes. Significant performance improvement of social-aware  approaches over social-oblivious approaches has been experimentally demonstrated \cite{Daly,Hui,Li}. \remove{Routing approaches along this line are, e.g., \cite{Daly,Hui,Li}, where the authors use different ``social" metrics to improve end-to-end routing performance. For instance, in \cite{Daly} the authors use the notions of ``ego-centric betweenness" and ``social similarity"; in \cite{Hui}, the authors propose the use of a social ``centrality" metric; in \cite{Li}, the authors use a ``social similarity" metric locally computed from the history of past encounters. Recently, a social-based approach based on a notion of ``ego-centric betweenness" has been proposed also to optimize multicast performance \cite{Gao}.}

Most existing social-aware forwarding approaches hinge on the ability of {\em storing} information on the state of the network that can be used to attempt to predict future meeting opportunities \cite{Boldrini,Costa,Daly,Hui,Ioannidis,Li}. Examples of state information stored at the nodes are history of past encounters, portion of the social network graph, etc. On the other hand, socially-oblivious routing protocols  such as epidemic \cite{Vahdat}, two-hops \cite{Gross} and the class of Spray-and-Wait protocols \cite{Spyro}, do not require storing additional information in the node buffers, which are then exclusively used to store the messages circulating in the network. Thus, comparing performance of social-aware vs. social-oblivious forwarding approaches would require modeling node buffers, which renders the resulting network model very complex. If storage capacity on the nodes is not accounted for in the analysis, unfair advantage would be given to social-aware approaches, which extensively use state information. 
This explains why the fundamental question of whether social-aware forwarding is superior to social-oblivious forwarding {\em per se} (and not due to storage of extensive status information) has remained unaddressed so far.

In \cite{Mei2}, a {\em stateless}, social-aware forwarding approach has been presented; this approach is motivated by the observation  that individuals with similar interests  meet relatively more often than individuals with diverse interests \cite{McP}.  The definition of this Interest-Based forwarding approach (IB forwarding in the following) allows a {\em fair} comparison -- i.e., under the same conditions for what concerns usage of storage resources -- between social-aware and social-oblivious forwarding approaches in PSNs.

\medskip 
\noindent
{\bf Our contributions.}
The main goal of this paper is to present, for the first time to our best knowledge, a comparison of asymptotic performance provided by social-aware and social-oblivious forwarding protocols for PSNs. For the reasons described above, we choose IB forwarding as a representative example of social-aware protocols, and BinarySW as a representative example of social-oblivious protocols. BinarySW \cite{Spyro} is chosen since in the mentioned work it is shown to be optimal within the class of Spray-and-Wait forwarding protocols, and given the extensive simulation-based evidences of its superiority within the class of stateless, social-oblivious approaches. 
Our interest in asymptotic investigation is motivated by the fact that PSN size can easily grow up to several thousands of nodes. 

The two protocols are compared under two different scenarios for what concerns node mobility: one, called {\em interest-based mobility}, in which mobility of individuals is influenced by similarity of their interests; and the second, called {\em social-oblivious mobility} in which mobility of individuals is oblivious to similarity of their interests. 
%
%
%
%

The specific technical contributions of this paper are:

\begin{enumerate}
\item An asymptotic analysis of IB and BinarySW forwarding performance -- expressed in terms of expected message delivery time -- in case of both  interest-based and social-oblivious mobility. We consider the case when only one relay node can be used to speed up message delivery and we  prove, under reasonable probabilistic assumptions, that  {\em IB forwarding provides asymptotic performance benefits compared to BinarySW}:  IB forwarding yields {\em bounded} expected message delivery time under both mobility models,  while BinarySW yields {\em bounded} expected delivery time with social-oblivious node mobility, but {\em unbounded} delivery time with interest-based mobility. The result  that IB forwarding provides an asymptotic performance gap with respect to BinarySW forwarding with interest-based mobility might not be surprising. However, ours is the first formal proof of this asymptotic performance benefit.
\item We quantitatively confirm the analysis of 1) through simulations  based both on a real-world data trace and a synthetic   human mobility model recently introduced in \cite{Mei}.
\item We extend the analysis of 1) in several ways. First, we consider the case when many relay nodes, more copies of the message, and more hops can be used to speed up message delivery. We show that the expected delivery time of BinarySW with interest-based mobility is asymptotically the same, thus proving that the asymptotic performance benefit of IB vs. BinarySW forwarding is retained also under these more general conditions.
We also consider a version of the forwarding algorithm in which the sender  knows the ID of the destination, but it does not know its {\em interest profile} (see next section for a formal definition of interest profile). We show that the expected message delivery time with IB forwarding and interest-based mobility remains bounded even in this more challenging networking scenario if we allow a limited number of relay nodes. 
\item The analysis of 1) and 3) is done under the scenario in which source and destination of a message have orthogonal interests. We also consider an average-case scenario in which the angle between the vectors representing source and destination interests is uniformly distributed in $[0,\pi/2]$, and show that under these less pessimistic conditions BinarySW yields {\em bounded} expected delivery time -- i.e., the same asymptotic performance as IB forwarding -- also with interest-based mobility.
\item A byproduct of the above analysis is the definition of a simple model of pair-wise contact frequency correlating similarity of individual interests with their meeting rate. We believe this model might be useful in studying other social-related properties of PSNs, and we deem such model a contribution in itself.
\end{enumerate}

The rest of this paper is organized as follows. In the next section, we shortly survey related work. In Section \ref{model}, we present the network and mobility models, and the forwarding approaches considered in this paper. In Section \ref{WorstCase}, we present the analysis of forwarding performance with social-oblivious mobility, while Section \ref{IBanalysis} considers the case of interest-based mobility. We will then present simulation results supporting the main theoretical findings of sections \ref{WorstCase} and \ref{IBanalysis} in Section \ref{sims}. In Section \ref{MoreHops} we extend the analysis to the case of multiple copies of the message circulating in the network, and arbitrary length of the message delivery path. In Section \ref{Unif}, we consider the case in which source and destination of a message do not have orthogonal interests. Finally, Section \ref{conclusions} concludes the paper.

\section{Related work}

Performance analysis of opportunistic networks has been subject of intensive research in recent years. In particular, the analysis of routing performance -- expressed in terms of the expected message delivery time, as done in this paper -- has been considered in \cite{AlH1,Chain,Groene,Spyro,Spyro2,Zhang}. More recently, also the distribution of the message delivery time has been studied \cite{Resta}. These studies assume a mobility model equivalent to one of the two-mobility models considered in this paper, namely the social-oblivious mobility model. Furthermore, they all consider social-oblivious routing protocols such as epidemic \cite{Vahdat}, two-hops \cite{Gross}, and BinarySW routing \cite{Spyro2}.

Recently, several opportunistic networking protocols accounting for social relationships between network members have been proposed. These protocols encompass different networking primitives such as unicast \cite{Daly,Hui,Li}, multicast \cite{Gao}, and publish-subscribe services \cite{Boldrini,Costa,Ioannidis}. While superiority of social-aware approaches over social-oblivious ones has been established in the literature based on several simulation-based evaluations, to our best knowledge theoretical analysis of social-aware networking protocols for opportunistic networks has remained unaddressed so far. As commented in the Introduction, this is likely due to the fact that existing social-aware forwarding protocols heavily build upon a notion of network state locally stored at the nodes to improve performance, hence a theoretical evaluation of their performance would require including in the model the evolution of the network state and/or buffer occupancy at the nodes, which appears to be a very difficult task.

In a recent paper \cite{Mei2}, some of the authors of this paper proposed a social-aware forwarding approach for opportunistic networks which, for the first time, does not exploit local storage of network state to speed up the forwarding process. Instead, the approach is based on a notion of similarity of interests between individuals, and on the empirical observation that individuals with relatively similar interests tend to meet more often than individuals with relatively diverse interests. 
In this paper, we take advantage of the stateless feature of the recently proposed social-aware forwarding approach of \cite{Mei2}, and present for the first time a theoretical investigation of social-aware forwarding protocols in opportunistic networks.

A challenging issue when investigating performance of social-aware forwarding protocols is taking into account the social dimension in the mobility model used to analyze routing performance. To the best of our knowledge, while different assumptions about pair-wise inter-meeting rates have been made in the literature (such as exponential \cite{AlH1,Groene,Spyro,Spyro2,Zhang}, power law \cite{Chain}, and power law with exponential tail \cite{Kara}), all existing analyses share the common feature that the pair-wise meeting rates between {\em any} pair of nodes in the network have the same stochastic property (e.g., they are all exponential random variables with a fixed rate $\lambda$ \cite{AlH1,Groene,Spyro,Spyro2,Zhang}). Clearly, these models cannot be used to express the influence of social relationships on pairwise meeting rates since, independently of the specific stochastic assumptions, the stochastic process modeling pair-wise meeting events between nodes is oblivious to node identities. A major contribution of this paper is introducing, for the first time in the literature, a simple model of pair-wise meeting rates which accounts for social relationships between each specific pair of nodes in the network. In particular, inspired by the notion of interest space introduced in \cite{Mei2}, we use similarity between user interest profiles as a proxy of the intensity of their social relationships, and define the intensity of the meeting process between any two specific nodes $A$ and $B$ in the network to be proportional to the similarity of their interest profiles (see the following for details). The pair-wise meeting process between any two nodes $A$ and $B$ is assumed to have exponential distribution, which is representative at least of the tail of inter-meeting time distributions extracted from real-world traces \cite{Kara}. We stress that ours is the first model of pair-wise meeting rates explicitly accounting for a form of social relationships between individuals; in particular, the rate of the exponential random variable modeling meeting rate between any two nodes $A$ and $B$ is a function of $A$'s and $B$'s interest profiles.

\section{The Network and Mobility Models}\label{model}
We consider a network of $n+2$ nodes, which we denote 
$\nodes=\{S,D, R_1,\ldots ,$ $R_n\}$: a {\em source node} $S$, a {\em destination node} $D$, and $n$ potential {\em relay nodes} $R_1,\dots,R_n$. 
Following the  model presented in  \cite{Mei2}, we model each of the $n+2$ nodes as 
a point in an $m$-dimensional {\em interest space} $[0,1]^m$, where  $m$ is the total number of interests  and $m\ll n$. We assume $m=\Theta(1)$. The $m$-dimensional vector associated with a node defines its {\em interest profile}, i.e., its degrees of interest in the various  dimensions of the interest space. Each node $A\in \nodes$ is thus assigned an $m$-dimensional vector $A[a_1,\ldots ,a_m]$ in the interest space. As in  \cite{Mei2}, we use the well-known {\em cosine similarity} metric \cite{Deza}, which measures similarity between two nodes
$A$ and $B$  as $\cos(\angle (AB))$, the cosine of the angle formed by $A$ and $B$. Since the cosine similarity metric implies that the norm of the vectors is not relevant, we can consider all vectors to have unit norm. 

The previous model is equivalent to assume nodes are represented as points in the  positive orthant of the $m$-dimensional unit sphere $\mathcal{S}$. Moreover, we assume all interests to be non-negative. Therefore,   $0\le \cos(\alpha_{AB})\le 1$, with higher values of $\cos(\angle(AB))$ corresponding to a higher similarity in interests between $A$ and $B$. Thus, the cosine similarity metric can be used as a measure of the degree of ``homophily" -- similarity in interests and habits~\cite{McP} -- between individuals.
We assume $S$ and $D$ to have orthogonal interests, namely
$S[1,0,\ldots, 0]$, and $D[0,1,\ldots ,0]$. We call this scenario the {\em worst-case delivery scenario} since it corresponds to the worst case situation (i.e., a situation resulting in the largest expected delivery delay) under the interest-based mobility model -- see below for a formal definition of interest-based mobility. Furthermore, in the analysis below, we assume the following concerning the distribution of interest profiles in the interest space: first, the angle $\alpha_i$ between the $i$-th interest profile and $S$'s interest profile is chosen uniformly at random in $[0,\pi/2]$; then, from all unit vectors in the intersection of the positive orthant of the $m$-dimensional sphere with that $(m-1)$-dimensional subspace, one vector is chosen uniformly at random -- see Figure \ref{sphere}.

It is important to observe that, while nodes are assumed to move around according to some mobility model $\mathcal{M}$, node coordinates in the interest space {\em do not change over time}. This is coherent with what happens in real world, where individual interests change at a much larger time scale (months/years) than that needed to exchange messages within the network. Thus, when focusing on a single message delivery session, it is reasonable to assume that node interest profiles correspond to fixed points in the interest space.

Similar to most analytical works on opportunistic networks \cite{Resta,Spyro,Spyro2}, we do not make any assumption about nodes following a specific mobility model. Rather, we make assumptions about the meeting rates between individuals in the network. 
In particular, we assume that the mobility metric relevant to our purposes is the {\em expected meeting time}, which is formally defined as follows:
\begin{defn}
Let $A$ and $B$ be nodes  in the network, moving in a bounded region $R$ 
according to a mobility model $\mathcal{M}$. Assume that at  time $t=0$ both $A$ and $B$ are independently distributed in $R$ 
according to the stationary node spatial distribution of  $\mathcal{M}$,\footnote{It is well-known that some mobility models, such as RWP, give rise to a non-uniform node spatial distribution in stationary conditions.} and that $A$ and $B$ have a fixed transmission range. The {\em first meeting time} $T$ between  $A$ and $B$ is the random variable (r.v.) corresponding to the time interval elapsing between $t=0$ and the instant of time where $A$ and $B$ first come into each other's transmission range. The {\em expected meeting time} is the expected value of the r.v. $T$. 
\end{defn}

Following the literature \cite{Resta,Spyro,Spyro2}, we assume  the meeting time between any pair of nodes $A$ and $B$ is described by a Poisson point process  of intensity $\lambda_{AB}$, i.e., 
$T_{AB}$ follows an exponential distribution with parameter $(\lambda_{AB})$ and thus $\bE[T_{AB}]=\frac{1}{\lambda_{AB}}$. As mentioned in the previous section, we are aware that recent findings indicate that pair-wise meeting patterns obey a power law+exponential tail dichotomy \cite{Kara}. However, the simplifying assumption of exponentially distributed pair-wise meeting process is made in the following to reduce the complexities brought in the analysis by the assumption of social-aware meeting rates.

In the analysis, we use the following well-known properties of exponentially distributed random variables:
\begin{fact}\label{fact:sum_poisson}
Given a set of $n$ independent exponentially distributed random variables $X_1,\ldots,X_n$ with parameters $\lambda_1,\ldots,\lambda_n$, let $X_m=\min\{X_1,\dots,X_n\}$ denote the first order statistic of the $n$ variables. Then, $X_m$ is an exponentially distributed random variable with rate parameter $\lambda_m=\sum_{i=1}^n \lambda_i$.
\end{fact}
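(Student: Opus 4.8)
The plan is to identify the law of $X_m$ by computing its survival function (the complementary cumulative distribution function) and recognizing that it coincides with that of an exponential random variable. The key structural observation is that the minimum of a collection of variables exceeds a threshold $t$ precisely when \emph{every} variable in the collection exceeds $t$; this rewrites a statement about a minimum as a statement about an intersection of elementary events, which independence then lets us factor.

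Concretely, first I would fix $t \geq 0$ and write the event $\{X_m > t\}$ as the intersection $\bigcap_{i=1}^n \{X_i > t\}$. By independence of $X_1,\ldots,X_n$ the probability factors as a product, and since each $X_i$ is exponential with parameter $\lambda_i$ its survival function is $e^{-\lambda_i t}$, so that
\[
\PR{X_m > t} = \PR{\bigcap_{i=1}^n \{X_i > t\}} = \prod_{i=1}^n \PR{X_i > t} = \prod_{i=1}^n e^{-\lambda_i t} = e^{-\lambda_m t},
\]
where $\lambda_m = \sum_{i=1}^n \lambda_i$. The final step is to note that $e^{-\lambda_m t}$ is exactly the survival function of an exponential random variable of rate $\lambda_m$, and that the survival function determines the law of a real-valued random variable uniquely; hence $X_m$ is exponentially distributed with parameter $\lambda_m$, and in particular $\bE[X_m] = 1/\lambda_m$.

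There is essentially no obstacle here: the only place the hypotheses enter is the factorization step, which uses independence, and everything else is a one-line manipulation of exponentials together with the fact that the distribution of a nonnegative random variable is pinned down by its survival function. If one preferred, an alternative route would differentiate the cumulative distribution function $1 - e^{-\lambda_m t}$ to exhibit the exponential density directly, or argue by induction on $n$ using the memoryless property of the exponential; but the survival-function argument is the most economical and avoids any convolution or inductive bookkeeping.
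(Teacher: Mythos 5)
Your argument is correct: the identification $\{X_m > t\} = \bigcap_{i=1}^n \{X_i > t\}$, the factorization by independence, and the observation that the resulting survival function $e^{-\lambda_m t}$ with $\lambda_m=\sum_i \lambda_i$ pins down the exponential law is exactly the standard proof of this fact. The paper states this as a well-known property and gives no proof of its own, so there is nothing to diverge from; your survival-function route is the canonical one and is complete as written.
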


\begin{fact}\label{fact:index}
Given $X_1,\ldots,X_n$ and $X_m$ as above, for each $j=1,\dots,n$,
\[
Prob(X_m=X_j)=\frac{\lambda_j}{\sum_{i=1}^n\lambda_i}~.
\]
\end{fact}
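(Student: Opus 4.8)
The plan is to prove this by conditioning on the realized value of $X_j$ and integrating out, exploiting the independence of the variables. First I would observe that, since the $X_i$ are continuous random variables, the probability of a tie $X_j = X_i$ for $i \neq j$ is zero; hence up to a null event the statement $X_m = X_j$ is equivalent to $X_j < X_i$ for every $i \neq j$, and the events $\{X_m = X_j\}$, $j=1,\ldots,n$, partition the sample space almost surely. This lets me write the target probability unambiguously as $Prob(X_j < X_i \text{ for all } i \neq j)$.

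Next I would condition on $X_j = t$ and use independence to factor the conditional probability into a product over $i \neq j$. Writing $\lambda_j e^{-\lambda_j t}$ for the density of $X_j$ and $e^{-\lambda_i t}$ for the tail $Prob(X_i > t)$, this yields
\[
Prob(X_m = X_j) = \int_0^\infty \lambda_j e^{-\lambda_j t} \prod_{i \neq j} e^{-\lambda_i t} \di t.
\]
Collapsing the product in the exponent turns the integrand into $\lambda_j e^{-(\sum_{i=1}^n \lambda_i) t}$, and evaluating the resulting elementary integral gives $\lambda_j / \sum_{i=1}^n \lambda_i$, as claimed.

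Alternatively -- and perhaps more in the spirit of the preceding Fact -- I could reduce to the two-variable case by grouping. Let $Y = \min_{i \neq j} X_i$; by Fact \ref{fact:sum_poisson}, $Y$ is exponential with rate $\sum_{i \neq j} \lambda_i$ and is independent of $X_j$. The event $\{X_m = X_j\}$ is then simply $\{X_j < Y\}$, and for two independent exponentials with rates $a$ and $b$ one has $Prob(A < B) = a/(a+b)$ (the $n=2$ instance of the integral above). Taking $a = \lambda_j$ and $b = \sum_{i \neq j} \lambda_i$ recovers the same expression.

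I do not expect a genuine obstacle here: this is a standard property of exponential order statistics, and the computation is essentially a one-line integral. The only point deserving care is the well-definedness of the event $\{X_m = X_j\}$, which relies on the continuity of the distributions to discard ties; flagging this explicitly is what keeps the argument rigorous rather than merely formal.
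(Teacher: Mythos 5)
Your proof is correct: both the direct conditioning integral and the reduction via Fact~\ref{fact:sum_poisson} to the two-variable case $Prob(X_j < Y)$ with $Y=\min_{i\neq j}X_i$ are standard, valid derivations, and your remark about ties having probability zero is the right point to flag for rigor. The paper itself states this as a well-known property of exponential random variables and offers no proof, so there is nothing to compare against; either of your two routes would serve as a complete justification.
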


In the sequel we consider  two mobility models and two forwarding algorithms.  The 
{\em social-oblivious} and {\em interest-based} mobility models are defined as follows:
\begin{itemize}
\item[--] {\em social oblivious mobility}: for any $A,B\in\nodes$, the meeting rate is $\lambda_{AB}=\lambda$ for some $\lambda>0$, independent of $A$ and $B$. This corresponds to the situation in which node mobility is not influenced by the social relationships between $A$ and $B$, and it is the standard model used in opportunistic network analysis \cite{Resta,Spyro,Spyro2}. 
\item[--]  {\em  interest-based mobility}: the meeting rate $\lambda_{AB}$ between $A$ and $B$ is defined as $\lambda_{AB}=k\cdot \cos(\alpha_{AB})+\delta(n)$. The first term in the definition of $\lambda_{AB}$ accounts for the ``homophily degree" between individuals $A$ and $B$, introducing a positive correlation between ``homophily degree" and frequency of meetings. The second term instead accounts for the fact that occasional meetings can occur also between perfect strangers;  we are  interested in the case $\delta(n)\to 0$  as $n \rightarrow \infty$, which corresponds to the fact that as $n$ grows, the probability of meeting by chance a specific individual decreases. Finally,  $k>0$ is a parameter modeling the intensity of the interest-based mobility component. 
\end{itemize}
\medskip
We are interested in characterizing the performance of {\em routing algorithms}, i.e., the dynamics related to delivery of a message $M$ from $S$ to $D$. With a slight abuse of notation, we use $S$, $D$, or $R_i$ to denote both a node, and its coordinates in the interest space. The dynamics of message delivery is governed by a routing protocol, which determines how many copies of $M$ shall circulate in the network, and the forwarding rules. 
In our analysis, we consider instances of both, social-aware and social-oblivious forwarding rules. More specifically, we consider the following two routing strategies when sending the message $M$ from $S$ to $D$:


\begin{itemize}
\item {\em  FirstMeeting} (FM): $S$ is allowed to generate two copies of $M$; $S$ always keeps a copy of $M$ for itself. Let $R_j$ be the first node met by $S$ amongst nodes $\{R_1,\dots,R_n\}$. If $R_j$ is met before node $D$, the second copy of $M$ is delivered to node $R_j$. From this point on, no new copy of the message can be created nor transferred to other nodes, and $M$ is delivered to $D$ when the first node among $S$ and $R_j$ gets in touch with $D$. If $D$ is met by $S$ before any of the $R_i$'s, $M$ is delivered directly. This  protocol is  equivalent to BinarySW as defined in \cite{Spyro} when the number of message copies in BinarySW is fixed to 2. However, for convenience in the following we retain the name FM to describe this routing approach.
\item {\em  InterestBased} \cite{Mei2}: IB($\gamma$) routing is similar to FM, the only difference being that the second copy of $M$ is delivered by $S$ to the first node $R_k\in\{R_1,\dots,R_n\}$ met by $S$ such that $cos(R_k,D)\ge\gamma$, where $\gamma\in[0,1]$ is a tunable parameter. Note that IB(0) is equivalent to FM routing. If it happens that after time $n$ still no node in $\{R_1,\dots,R_n\}$ satisfying the forwarding condition is encountered, then the first relay node meeting $S$ after time $n$ is given the copy of $M$ independently of similarity between interest profiles. 
\end{itemize}

Note that implicit in the IB routing approach is the fact that a node $S$ generating a message $M$ for a certain destination node $D$ knows $D$'s interest profile. Conceptually, this is equivalent to the standard assumption that $S$ knows $D$'s address when sending message  $M$. Thus, we can think of $D$'s interest profile as her/his address in the network, although technically speaking, a node's interest profile cannot be directly used as address since uniqueness of node IDs in principle cannot be guaranteed. In Section \ref{MoreHops}, we will extend the analysis to cover the case in which $D$'s interest profile is not known to node $S$.

We remark that IB routing is a stateless approach:  interest profiles of encountered nodes are stored only for the time needed to locally compute the similarity metrics, and discarded afterwards. 
Based on this observation, in the following we will make the standard assumption that node buffers have unlimited capacity \cite{Resta,Spyro,Spyro2}, which contributes to simplifying the analysis.

In the following, we denote by $T^{\mu}_{X}$  the random variable corresponding to the time at which  $M$ is first delivered to $D$,  assuming a routing protocol $X\in\{$FM,IB($\gamma$)$\}$ and a mobility model $\mu\in\{so,ib\}$, where $so$ and $ib$ represent social-oblivious and interest-based mobility, respectively. Our interest in characterizing delivery delay is due to the fact that, once a notion of TimeToLive is associated with a message, delivery delay can be used also to estimate the percentage of messages successfully delivered to destination.

For both algorithms and both mobility models, we  consider the following random variables: $T_1$ is the r.v. counting the time it takes for $S$ to meet the first node in the set $\mathcal{R}=\nodes \setminus \{S\}$;  $T_2$ is $0$ if $D$ is the first node in  $\mathcal{R}$ met by $S$; otherwise, if $R_j$ is the first relay node met by $S$, $T_2$ is the r.v. counting the time, starting at $T_1$, until the first node amongst $S$ and $R_j$ meets $D$.
\section{Bounds on the expected delivery time: Social Oblivious mobility}\label{WorstCase}
In this section, we evaluate FM and IB routing in the social-oblivious mobility scenario, giving asymptotic expressions for  $\bE[T^{so}_{FM}]$
and $\bE[T^{so}_{IB(\gamma)}]$. Our results prove that in the social oblivious mobility scenario  $\bE[T^{so}_{FM}]$
and $\bE[T^{so}_{IB(\gamma)}]$ are asymptotically equal to a constant.
\subsection{First-meeting routing}\label{so-mobility} 
For the sake of completeness, we include the derivation of $\bE[T^{so}_{FM}]$ under social-oblivious mobility, which can be easily done along the lines in \cite{Spyro}.
Since the meeting processes between $S$ and any other node are independent, and they can be modeled as exponentially distributed random variables with the same rate parameter $\lambda$, by Fact~\ref{fact:sum_poisson}, the time for node $S$ to meet the first node in set $\mathcal{R}=\{D,R_1,\dots,R_n\}$ is itself an exponentially distributed r.v. with rate parameter $(n+1)\lambda$. Thus, $\bE[T_1]=\frac{1}{(n+1)\lambda}$.
With probability $1/(n+1)$, the first node met by $S$ is $D$, and $M$ is delivered to destination. On the other hand, with probability $n/(n+1)$, starting from time $T_1$ we have  nodes $S$ and $R_j$ carrying a copy of message $M$. Identical   argument yields,
$\bE[T_2]=\frac{1}{2\lambda}.$ 
Putting everything together, we can conclude that: 
\[
\bE[T^{so}_{FM}]=\frac{1}{\lambda(n+1)}\frac{1}{n+1}+\left(\frac{1}{2\lambda}+\frac{1}{\lambda(n+1)}\right)\frac{n}{n+1},
\]
which converges to $\frac{1}{2\lambda}$, as $n \rightarrow \infty$. That is, the expected message delivery time with FM routing and social-oblivious mobility in very large networks converges to a {\em positive constant}. We summarize this result in the following proposition:
\begin{prop}\label{FMSO}
$\bE[T^{so}_{FM}] =\frac{1}{2\lambda}(1+o(1))$.
\end{prop}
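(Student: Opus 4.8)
The plan is to condition on which node of $\mathcal{R}=\{D,R_1,\dots,R_n\}$ the source meets first and to exploit the memoryless property of the exponential meeting processes, decomposing the delivery along the two random variables $T_1$ and $T_2$ defined above. First I would note that, under social-oblivious mobility, the $n+1$ meeting processes between $S$ and the nodes of $\mathcal{R}$ are independent exponentials of common rate $\lambda$; hence by Fact~\ref{fact:sum_poisson} the variable $T_1$ is exponential of rate $(n+1)\lambda$, so $\bE[T_1]=\frac{1}{(n+1)\lambda}$. By Fact~\ref{fact:index} the first node met is $D$ with probability $\frac{1}{n+1}$ and some relay $R_j$ with probability $\frac{n}{n+1}$; moreover, for independent exponential variables the identity of the minimizer is independent of its value, so this conditioning leaves $\bE[T_1]$ unchanged.

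Next I would analyze $T_2$ in the branch where a relay $R_j$ is met first. The crucial observation is that at the instant $T_1$ the $S$--$D$ meeting process has not yet fired (otherwise $D$ would have been met first), so by memorylessness its residual time is a fresh exponential of rate $\lambda$; likewise the first $R_j$--$D$ meeting occurring after $T_1$ is, again by memorylessness, an independent exponential of rate $\lambda$. Thus $T_2$ is the minimum of two independent exponentials of rate $\lambda$, which by Fact~\ref{fact:sum_poisson} is exponential of rate $2\lambda$, giving $\bE[T_2]=\frac{1}{2\lambda}$. In the complementary branch $D$ is met first, delivery is immediate, and $T_2=0$.

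I would then assemble these pieces by the law of total expectation. In the branch where $D$ is met first the message is delivered at time $T_1$, and in the branch where a relay is met first it is delivered at time $T_1+T_2$; using $\bE[T_1]=\frac{1}{(n+1)\lambda}$ in both branches and $\bE[T_2]=\frac{1}{2\lambda}$ in the second yields
\[
\bE[T^{so}_{FM}]=\frac{1}{n+1}\cdot\frac{1}{(n+1)\lambda}+\frac{n}{n+1}\left(\frac{1}{(n+1)\lambda}+\frac{1}{2\lambda}\right),
\]
which is exactly the displayed expression. Finally, letting $n\to\infty$, the first summand is $O(1/n^2)$ and the $\frac{1}{(n+1)\lambda}$ contribution of the second branch is $O(1/n)$, while $\frac{n}{n+1}\cdot\frac{1}{2\lambda}=\frac{1}{2\lambda}\left(1-\frac{1}{n+1}\right)$; collecting all lower-order terms into a relative error gives $\bE[T^{so}_{FM}]=\frac{1}{2\lambda}(1+o(1))$.

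I expect the only genuinely delicate point to be the memorylessness argument for $T_2$: one must verify that, after $S$ hands its second copy to $R_j$ at time $T_1$, both the residual $S$--$D$ meeting time and the first post-$T_1$ $R_j$--$D$ meeting time are independent exponentials of rate $\lambda$ (independent also of the conditioning on which node was met first), so that their minimum is exponential of rate $2\lambda$. Everything else is a routine application of Facts~\ref{fact:sum_poisson} and~\ref{fact:index} together with the law of total expectation, followed by an elementary asymptotic expansion.
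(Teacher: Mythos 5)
Your proposal is correct and follows essentially the same argument as the paper: the same decomposition into $T_1$ and $T_2$, the same application of Facts~\ref{fact:sum_poisson} and~\ref{fact:index}, and the identical final expression for $\bE[T^{so}_{FM}]$ before taking $n\to\infty$. Your explicit treatment of the memorylessness step for $T_2$ is a welcome addition of detail that the paper leaves implicit, but it does not change the route.
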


\subsection{Interest-based routing}\label{IB}
We now consider the case of IB routing. For clarity of the exposition, we  set $\gamma=\frac{0.29}{m-1}$ and prove all results for this value of $\gamma$. The extension to other values of 
$\gamma \in (0,1)$ is straightforward.

We start with a technical lemma that will be also used in other parts of the paper.
Denote by $N_f$ the random variable counting the number of nodes $R_i$ satisfying the forwarding condition, i.e., nodes whose interest profile makes an angle of at most $\arccos \gamma$ with $D$'s interest profile. Furthermore, we add the condition that $R_i$ angle with $S$'s interest profile is at most $\arccos \frac{3\pi}{8}$. 
\begin{lemma}\label{lem:N_f}
 Denote by $N_f$ the random variable counting the number of nodes $R_i$ satisfying the following conditions: 1) $R_i$'s interest profile makes an angle at most $\arccos \gamma$ with $D$'s interest profile; and 2) $R_i$'s interest profile makes an angle of at most $\arccos \frac{3\pi}{8}$ with $S$'s interest profile. Then, 
with probability at least $1-e^{-\Theta(n)}$, for any $\nu > 0$, $N_f \geq (1-\nu)n\frac{1}{4(m-1)}$.
\end{lemma}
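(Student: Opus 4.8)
The plan is to reduce the claim to a Chernoff bound, after establishing that a single relay $R_i$ satisfies both conditions with probability $p\ge\tfrac{1}{4(m-1)}$, independently of the others. Writing $S=[1,0,\dots,0]$, $D=[0,1,0,\dots,0]$ and $R_i=[r_1,\dots,r_m]$ with $r_j\ge 0$ and $\sum_j r_j^2=1$, condition 1 reads $r_2=\cos(\angle(R_i,D))\ge\gamma$, and condition 2 reads $\alpha_i=\angle(R_i,S)\le\tfrac{3\pi}{8}$, i.e. $r_1=\cos\alpha_i\ge\cos\tfrac{3\pi}{8}$. I would first recall the sampling model: $\alpha_i$ is uniform on $[0,\pi/2]$, and conditionally on $\alpha_i$ one has $(r_2,\dots,r_m)=\sin\alpha_i\cdot u$, where $u$ is uniform on the positive orthant of the unit sphere in $\Rb^{m-1}$. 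The only structural fact I need from this is that $r_2,\dots,r_m$ are \emph{exchangeable} given $\alpha_i$, since the positive-orthant uniform law is invariant under permuting coordinates.

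The core estimate is the lower bound on $p$. I restrict to the favourable sub-event $\alpha_i\in[\pi/4,\,3\pi/8]$: this already forces condition 2, and since $\alpha_i$ is uniform on $[0,\pi/2]$ it has probability exactly $\tfrac{\pi/8}{\pi/2}=\tfrac14$. On this event $\sin^2\alpha_i\ge\tfrac12$, hence $\sum_{j=2}^m r_j^2=\sin^2\alpha_i\ge\tfrac12$. A pigeonhole/norm argument then handles condition 1: if every coordinate satisfied $r_j<\gamma=\tfrac{0.29}{m-1}$ we would get $\sum_{j=2}^m r_j^2<(m-1)\gamma^2=\tfrac{0.29^2}{m-1}<\tfrac12$, a contradiction, so at least one of $r_2,\dots,r_m$ is $\ge\gamma$ \emph{deterministically} throughout the sub-range. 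Thus $\bE[\,\#\{j\ge 2:r_j\ge\gamma\}\mid\alpha_i\,]\ge 1$, and by exchangeability this expectation equals $(m-1)\,\bP(r_2\ge\gamma\mid\alpha_i)$, giving $\bP(r_2\ge\gamma\mid\alpha_i)\ge\tfrac{1}{m-1}$. Integrating against the density $\tfrac{2}{\pi}$ of $\alpha_i$ over $[\pi/4,3\pi/8]$ yields $p\ge\tfrac14\cdot\tfrac{1}{m-1}=\tfrac{1}{4(m-1)}$.

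Finally, the placements of $R_1,\dots,R_n$ are i.i.d. while $S,D$ are fixed, so $N_f$ is a sum of $n$ i.i.d. indicators, $N_f\sim\mathrm{Bin}(n,p)$ with $p\ge\tfrac{1}{4(m-1)}$ and $\bE[N_f]=np\ge\tfrac{n}{4(m-1)}$. Since $m=\Theta(1)$, $p$ is a constant, so the standard Chernoff lower-tail bound gives $\bP\!\big(N_f<(1-\nu)\bE[N_f]\big)\le e^{-\nu^2 np/2}=e^{-\Theta(n)}$ for any fixed $\nu>0$; on the complementary event $N_f\ge(1-\nu)\bE[N_f]\ge(1-\nu)\tfrac{n}{4(m-1)}$, as claimed.

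I expect the main obstacle to be the geometric step rather than the concentration: one must correctly exploit the non-uniform sampling of $R_i$ (uniform \emph{angle} to $S$, not uniform on the sphere) and convert the deterministic observation "at least one coordinate exceeds $\gamma$" into the per-coordinate bound $\tfrac{1}{m-1}$ via exchangeability. Once the per-node probability $\tfrac{1}{4(m-1)}$ is secured, the remainder is a routine application of independence and Chernoff.
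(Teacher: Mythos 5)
Your proof is correct and follows essentially the same route as the paper's: restrict to $\alpha_i\in[\pi/4,\,3\pi/8]$ (probability $1/4$), use exchangeability of the coordinates $r_2,\dots,r_m$ to convert ``at least one coordinate is $\ge\gamma$'' into $\bP(r_2\ge\gamma\mid\alpha_i)\ge\frac{1}{m-1}$, and finish with independence and a Chernoff bound. The only cosmetic difference is that you pigeonhole directly on $\sum_{j\ge2}r_j^2\ge\frac12$ against $(m-1)\gamma^2<\frac12$, whereas the paper first passes to the linear sum via $r_j\ge r_j^2$ and pigeonholes on $\sum_{j\ge2}r_j\ge0.29$; both give the same deterministic conclusion.
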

\begin{proof}
Since with probability $\frac14$, the angle between $S$ and an arbitrary intermediate node $R_i$ is between $\frac{\pi}{4}$ and $\frac{3\pi}{8}$, with that probability $\cos \angle(S,R_i) \leq 1/\sqrt{2}$. Since we assumed $S$ to have coordinates $(1,0,\ldots,0)$, this means that in that case $ \sum_{j \neq 1} (R_i)_j \geq \sum_{j \neq 1}((R_i)_j)^2 \geq 1-\frac{1}{\sqrt{2}} \geq 0.29$. Since among all positions making the same angle with $S$ all have the same probability to occur, with probability at least $\frac{1}{m-1}$, the value at the second coordinate is at least a $\frac{0.29}{m-1}=\gamma$. Thus, with probability at least $\frac{1}{4(m-1)}$, $\cos \angle(R_i,D) \geq \gamma$, or equivalently, $\angle(R_i,D) \leq \arccos \gamma$. Hence, $\bE[N_f] \geq n\frac{1}{4(m-1)}$. Since the positions of all nodes $R_i$ are chosen independently, by Chernoff bounds, for any $\nu > 0$, $\bP[N_f \leq (1-\nu)\bE[N_f]] \leq e^{-\Theta(n)}$, and the statement follows.
\end{proof}

\begin{prop}\label{IBSO}
$\bE[T^{so}_{IB(\gamma)}] =\frac{1}{2\lambda}(1+o(1))$, where $\gamma=\frac{0.29}{m-1}$.
\end{prop}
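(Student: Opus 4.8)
The plan is to reduce the analysis to the FirstMeeting computation of Proposition~\ref{FMSO}, with the set of $n$ relays replaced by the $N_f$ ``good'' relays that satisfy the forwarding condition of Lemma~\ref{lem:N_f}. The crucial observation is that under social-oblivious mobility every meeting rate equals $\lambda$ irrespective of node coordinates, so the meeting process is stochastically independent of the (random) interest profiles, hence of $N_f$. This lets me condition on $N_f$ without disturbing the distribution of the meeting times.

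First I would set up the reduced race. Meetings between $S$ and a relay that does not satisfy the forwarding condition are irrelevant to IB routing (before the time-$n$ cutoff), so conditioned on $N_f$ the first relevant event for $S$ is meeting either $D$ or one of the $N_f$ good relays. By Fact~\ref{fact:sum_poisson} this first event time, call it $T_1'$, is exponential with rate $(N_f+1)\lambda$, so $\bE[T_1'\mid N_f]=\frac{1}{(N_f+1)\lambda}$; by Fact~\ref{fact:index} the first event is a meeting with $D$ (direct delivery) with probability $\frac{1}{N_f+1}$ and with a good relay otherwise. In the latter case, once $S$ hands the copy to some $R_k$, memorylessness restarts the $S$--$D$ and $R_k$--$D$ meeting processes, and $T_2$, the time until one of the two carriers reaches $D$, is exponential with rate $2\lambda$, so $\bE[T_2]=\frac{1}{2\lambda}$. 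Putting these together,
\[
\bE[T^{so}_{IB(\gamma)}\mid N_f]=\frac{1}{(N_f+1)\lambda}+\frac{N_f}{N_f+1}\cdot\frac{1}{2\lambda}.
\]

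Next I would pass to the unconditional expectation. By Lemma~\ref{lem:N_f}, with probability $1-e^{-\Theta(n)}$ we have $N_f\geq (1-\nu)\frac{n}{4(m-1)}=\Theta(n)\to\infty$, on which event the displayed conditional expectation equals $\frac{1}{2\lambda}+O(1/n)$ (the first term is $O(1/n)$ and $\frac{N_f}{N_f+1}\to 1$). To control the complementary event of probability $e^{-\Theta(n)}$, I would use the uniform deterministic bound $T^{so}_{IB(\gamma)}\le T_{SD}$: since $S$ always retains a copy, the message is delivered no later than the direct $S$--$D$ meeting, and $T_{SD}\sim\mathrm{Exp}(\lambda)$ has $\bE[T_{SD}]=\frac{1}{\lambda}$ independently of $N_f$. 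Hence the bad event contributes at most $\frac{1}{\lambda}e^{-\Theta(n)}=o(1)$ to the expectation, and splitting the expectation over the two events gives $\bE[T^{so}_{IB(\gamma)}]=\frac{1}{2\lambda}+o(1)=\frac{1}{2\lambda}(1+o(1))$.

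The genuinely routine part is the reduced race, which mirrors Proposition~\ref{FMSO}; the main thing to handle carefully is the exceptional behavior. Specifically, I must argue that the time-$n$ cutoff in the definition of IB($\gamma$) is harmless: on the typical event a good relay (or $D$) is met in expected time $O(1/n)$, so the cutoff almost never fires, and even when it does the uniform bound $T^{so}_{IB(\gamma)}\le T_{SD}$ still applies --- indeed, under social-oblivious mobility any relay handed the copy reaches $D$ at rate $\lambda$, so a ``wrong'' relay is no worse than a good one. The only real subtlety is thus keeping the contribution of the $e^{-\Theta(n)}$-probability event negligible, for which the independence of the meeting process from the node coordinates together with the crude bound $\bE[T_{SD}]=1/\lambda$ suffice.
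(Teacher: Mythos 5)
Your proof is correct and follows essentially the same route as the paper: invoke Lemma~\ref{lem:N_f} to get $N_f=\Theta(n)$ good relays with probability $1-e^{-\Theta(n)}$, deduce $\bE[T_1]=O(1/n)$ on that event, and reuse the FM computation $\bE[T_2]=\frac{1}{2\lambda}$ since under social-oblivious mobility the identity of the relay is irrelevant. Your handling of the exceptional event via the deterministic bound $T^{so}_{IB(\gamma)}\le T_{SD}$ and the independence of meeting times from interest profiles is in fact a bit cleaner than the paper's crude bound $\bE[T_1]\le n(1+\frac{1}{\delta n})$, but the argument is the same in substance.
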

\begin{proof}
Define $N_f$ as above. By Lemma~\ref{lem:N_f}, 
 with probability at least $1-e^{-\Theta(n)}$, for any $\nu > 0$, $N_f \geq (1-\nu)n\frac{1}{4(m-1)}$. Denote by $\cE$ the event that $N_f$ has at least this size. Rewrite $\bE[T_1]$ as follows: $\bE[T_1]=\bE[T_1 |\cE]\bP[\cE]+\bE[T_1|\hcE]\bP[\hcE]$, where $\hcE$ is the complementary of event $\cE$. Since $\bE[T_1]\leq n(1+\frac{1}{\delta n})$, we have that
  $\bE[T_1 |\cE]$ is the dominating contribution. The fact that  $\bE[T_1 |\cE]= O(1/n)$, combined with the fact that $\bP[\cE]\ge 1-e^{-\Theta(n)}$ give $\bE[T_1]= O(1/n)$. As in the analysis of the FM routing algorithm, $\bE[T_2]=\frac{1}{2\lambda}$, and thus  $\bE[T^{so}_{IB}]=\frac{1}{2\lambda}(1+o(1))$.
 \end{proof}

Propositions~\ref{FMSO} and \ref{IBSO} imply the announced result $\bE[T^{so}_{FM}] =\bE[T^{so}_{IB(\gamma)}](1+o(1)) ={2\lambda}(1+o(1))$.

\section{Bounds on the expected delivery time: Interest-based mobility}\label{IBanalysis}

The analysis of FM and IB routing under interest-based mobility is more challenging than the one for the social-oblivious model, and the results clearly differentiate the asymptotic behavior of the two routing protocols. 
\subsection{First-meeting routing}
Consider now the case that FM routing is used in presence of interest-based mobility. The difficulty in performing the analysis stems from the fact that, under interest-based mobility, the rate parameters of the exponential random variables representing the first meeting time between $S$ and the nodes in the set $\mathcal{R}$ are themselves random variables. 

Denote by $\alpha_i$ the random variable representing the angle between node $S$ and $R_i$ in $\mathcal{S}$, and by $\lambda_i=k \cos \alpha_i+\delta$ the random variable corresponding to the meeting rate between $S$ and $R_i$. Recall that we assume that $S$ and $D$ are orthogonal, and that the $\alpha_i$s are distributed uniformly at random. Hence, the probability density for any $\alpha_i$ to attain any value $x \in [0,\pi/2]$ is $2/\pi$.

In order to make results in case of social oblivious and interest-based mobility comparable, we first derive the expected value of $\lambda_i$, and set the normalization constant $k$ in such a way that $\bE[\lambda_i]=\lambda$.
 We have
\begin{equation}\label{expectation}
\bE[\lambda_i]= \int_0^{\pi/2} \frac{2}{\pi} (k\cos(\alpha)+\delta) d\alpha = \frac{2k}{\pi}+\delta,
\end{equation}
and thus $k=\frac{\pi}{2}(\lambda-\delta)$.

To compute $\bE[T_1]$ exactly, we have to consider an $n$-fold integral taking into account all possible positions of the nodes $R_1,\ldots,R_n$ in the interest space\footnote{Recall that we are considering  the fixed, but randomly chosen, position of a node's interest profile in the interest space, not its physical position, which depends on the mobility model $\mathcal{M}$.}. As we will see shortly, $T_1$ is asymptotically negligible compared with $T_2$, therefore we can use the trivial lower bound of $\delta$ on the rate of the random variables corresponding to the first meeting time between $S$ and any other node, and thus we get $\bE[T_1] \leq \frac{1}{n\delta}$. For the same above described reason for the computation of $T_1$, computing $T_2$ exactly also seems difficult. In the following lemma, we give a lower bound on $\bE[T_2]$.
\begin{lemma}\label{lem:T2}
Under the above assumptions, for some constants $c, c' >0$, we have
\[
\bE[T_2] \geq \min\{c \log(1/\delta),\frac{c'n}{\log n}\}.
\]
\end{lemma}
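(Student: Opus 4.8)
The plan is to condition on the identity of the first relay met by $S$ and to use the memorylessness of the exponential meeting processes. Let $\Sigma=\sum_{i=1}^n\lambda_i$ be the total rate at which $S$ meets the relays, where $\lambda_i=k\cos\alpha_i+\delta$. Because $S$ and $D$ are orthogonal, $S$ meets $D$ at rate $\lambda_{SD}=k\cos(\pi/2)+\delta=\delta$, so by Fact~\ref{fact:index} relay $R_j$ is the first node of $\mathcal{R}$ reached by $S$ with probability $\lambda_j/(\delta+\Sigma)$. Conditioned on this, both $S$ and $R_j$ carry $M$, and by the memoryless property the residual time for $S$ to meet $D$ (rate $\delta$) and for $R_j$ to meet $D$ (rate $k\cos\angle(R_j,D)+\delta$) are independent exponentials; by Fact~\ref{fact:sum_poisson} their minimum $T_2$ is exponential with rate $2\delta+k\cos\angle(R_j,D)$. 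Hence
\[
\bE[T_2]=\bE\!\left[\sum_{j=1}^n\frac{\lambda_j}{\delta+\Sigma}\cdot\frac{1}{2\delta+k\cos\angle(R_j,D)}\right],
\]
the outer expectation being over the random placement of $R_1,\dots,R_n$.

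Next I would analyze the similarity $\cos\angle(R,D)$ of a single relay to the destination. Since $D=(0,1,0,\dots,0)$, this equals the second coordinate $r_2$ of $R$, and under the sampling rule we may write $r_2=\sin\alpha\cdot w$, with $\alpha=\angle(S,R)$ uniform in $[0,\pi/2]$ and $w$ a coordinate of a uniform unit vector in the positive orthant of the $(m-1)$-dimensional orthogonal complement of $S$. The crucial structural point, and the only place orthogonality of $S$ and $D$ is used, is that $r_2$ can be forced small while the meeting rate $\lambda=k\cos\alpha+\delta$ stays of order $k$: a relay close to $S$ is automatically far from $D$. Concretely I would prove that the density of $r_2$ is bounded below by a positive constant $c_0$ on an interval $(0,x_0]$ with $x_0=\Theta(1)$, even conditioned on $\lambda\ge c_1k$ --- for $m=2$ this comes from small $\alpha$ (where $r_1=\cos\alpha$ is close to $1$), and for $m\ge 3$ from the sub-sphere coordinate $w$ being near $0$; in either case a short calculation gives density $\Theta(1)$ near $0$.

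With this in hand the bound follows cleanly by discarding the coupling in the denominator. Since $\Sigma\le n(k+\delta)$ deterministically, we have $\lambda_j/(\delta+\Sigma)\ge\lambda_j/(\delta+n(k+\delta))$, and summing,
\[
\bE[T_2]\ \ge\ \frac{n}{\delta+n(k+\delta)}\,\bE\!\left[\frac{\lambda_1}{2\delta+k\,r_2}\right],
\]
where $r_2=\cos\angle(R_1,D)$ for one relay. Restricting the inner expectation to $\{\lambda_1\ge c_1k\}$ and integrating the density lower bound gives
\[
\bE\!\left[\frac{\lambda_1}{2\delta+k\,r_2}\right]\ \ge\ c_1k\int_0^{x_0}\frac{c_0}{2\delta+k x}\,dx\ =\ c_0c_1\,\log\frac{2\delta+kx_0}{2\delta}\ =\ \Theta\!\big(\log(1/\delta)\big).
\]
Since $k=\Theta(1)$ and $\delta=o(1)$, the prefactor $n/(\delta+n(k+\delta))$ tends to $1/(k+\delta)=\Theta(1)$, so $\bE[T_2]\ge c\log(1/\delta)$ for a suitable constant $c$. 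As $\min\{c\log(1/\delta),c'n/\log n\}\le c\log(1/\delta)$, this already establishes the statement (and shows the delivery time diverges whenever $\delta=\delta(n)\to 0$, which is the use we make of the lemma).

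The genuinely delicate step is the distributional claim of the second paragraph: the first relay met is biased toward a large meeting rate with $S$, i.e.\ toward a large first coordinate, and one must check that this selection bias does not deplete the mass of $r_2=\cos\angle(R,D)$ near $0$. Here orthogonality works in our favour --- a large first coordinate \emph{forces} a small second coordinate --- but making this precise requires controlling the joint law of the position of a relay (its rate $\lambda_j$ and its similarity to $D$) rather than either marginal alone. The remaining ingredients are routine: the exponential-race bookkeeping of the first paragraph and the elementary integral $\int \frac{dx}{2\delta+kx}=\Theta(\log(1/\delta))$ that converts the density bound into the logarithmic growth. If instead one argues through concentration of $\Sigma$ around $n\lambda$ together with the empirical similarities of the $n$ relays, the finiteness of the relay set limits the attainable estimate in the extreme small-$\delta$ regime, and it is there that such an argument yields only the weaker floor $c'n/\log n$ --- the second term of the minimum --- whereas the deterministic normalization above avoids this loss.
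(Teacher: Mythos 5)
Your proof is correct, and it takes a genuinely different -- and in fact tighter -- route than the paper. The paper's argument discretizes the angle $\angle(S,R)$ into bins of width $\delta$, applies Chernoff-type concentration to the bin occupancies $X_i$ and to the total rate $\lambda_m=\sum_i\lambda_i$, uses Fact~\ref{fact:index} to estimate the probability that the first relay lands in bin $i$, and lower-bounds the conditional delay via the probability-$e^{-1}$ event that an exponential of rate at most $(ki+1)\delta$ exceeds $1/((ki+1)\delta)$; summing $\sum_i 1/i$ over bins yields the logarithm. The concentration step is what forces the restriction $\delta=\omega(\log n/n)$ and the coupling for smaller $\delta$, and hence the second term of the minimum. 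You instead compute the conditional law of $T_2$ exactly (exponential of rate $2\delta+k\cos\angle(R_j,D)$ given that $R_j$ is the first relay), replace concentration of $\Sigma$ by the deterministic bound $\Sigma\le n(k+\delta)$, and reduce by exchangeability to a single-relay integral $\int_0^{x_0}\frac{c_0\,dx}{2\delta+kx}=\Theta(\log(1/\delta))$. This avoids all discretization and gives the unconditional bound $\bE[T_2]\ge c\log(1/\delta)$, which implies the stated minimum and is strictly stronger when $\delta$ is very small. The one step you only sketch -- that the density of $r_2=\cos\angle(R,D)$ is bounded below by a constant on some $(0,x_0]$ jointly with the event $\lambda_1\ge c_1k$ -- is correct and easy to complete (for $m=2$ it is immediate from $r_2=\sin\alpha$ with $\alpha$ uniform; for $m\ge3$ condition on $\alpha\in[\pi/6,\pi/4]$ and use that the marginal density of a coordinate of a uniform point on the positive orthant of the $(m-2)$-sphere is bounded below near $0$ for fixed $m$), and its level of rigor matches the paper's own. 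Note also that the selection bias toward relays with large $\lambda_j$, which you flag as the delicate point, is disposed of automatically once you pass to the deterministic normalization, since only a lower bound on each weight $\lambda_j/(\delta+\Sigma)$ is needed.
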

\begin{proof}
Assume that $\delta=\omega(\log n/n)$. If $\delta$ is smaller, we can couple the model with some other $\hat{\delta}=\omega(\log n/n))$ and  obtain a new stochastic process whose meeting time $T_2$ is stochastically bounded from above by the meeting time $T_2$ of the original model. Let $R_f$ be the first node met by $S$, and assume $R_f\neq D$. We analyze the intensity  of the first meeting process between $R_f$ and $D$. Since this intensity is always greater than or equal to the intensity of the corresponding process between $S$ and $D$, we will have a bound for $T_2$. Recall that $S$ and $D$ are orthogonal in the interest space, hence they have minimal pairwise meeting rate.

 Partition the interval $[0,\pi/2]$ into subintervals $I_1,\ldots,I_{\pi/(2\delta)}$ of length $\delta$. Denote by $X_i$ the random variable corresponding to the number of points in the $i$-th subinterval. For any fixed $i$, $\bE[X_i]=\frac{n 2\delta}{\pi} =\omega(\log n)$. Using Chernoff's bounds, for any $\epsilon > 0$,  $\bP[X_i \leq (1-\epsilon)\bE[X_i]] \leq n^{-100}$, and $\bP[X_i \geq (1+\epsilon)\bE[X_i]] \leq n^{-100}$. Taking a union bound over all $\Theta(1/\delta) \leq n$ intervals, we see that with probability at least $1-n^{-98}$, the above property holds in {\em all} subintervals.
 
\noindent Consider now the random variable $\lambda_m=\sum_{i=1}^n\lambda_i$, corresponding to the rate parameter of the r.v. representing the first meeting time between node $S$ and $R_f$. From equation (\ref{expectation}), together with linearity of expectation we have $\bE[\lambda_m]=n(\frac{2k}{\pi}+\delta)$, and by Theorem A.1.15 of~\cite{AlonSpencer}, 
$\bP[\lambda_m \geq (1+\nu)n(\frac{2k}{\pi}+\delta)]=\bP[\lambda_m \geq (1+\nu)\bE[\lambda_m]] \leq n^{-100}$. Thus, with probability at least $1-n^{-100}$, the rate parameter $\lambda_m$ is at most $(1+\nu)n(\frac{2k}{\pi}+\delta)$. Hence, with probability at least $1-n^{-97}$, in {\em all} subintervals of length $\delta$ the number of nodes $X_i$ is within $(1 \pm \epsilon)\bE[X_i]$, and the rate parameter $\lambda_m$ is at most $(1+\nu)n(\frac{2k}{\pi}+\delta)$. Since in this lemma we are only interested in a lower bound on $\bE[T_2]$, we condition now under this event, call it $\cF_1$.  Observe that the rate parameter of the first meeting r.v. between a node in the $i$-th subinterval and $S$ is at least $k \cos(i \delta)+\delta$. Let $I_i$ denote the set of rate parameters belonging to the $i$-th sub-interval of $[0,\pi/2]$, and let $\lambda_{mi}=\sum_{j\in I_i} \lambda_j$. Applying again Theorem A.1.15 of~\cite{AlonSpencer}, for each subinterval $i$ with probability at least $1-n^{-100}$, we have:
\[
\lambda_{mi}\ge(1-\eta)(1-\epsilon)\frac{n \delta 2}{\pi}(k \cos(i \delta)+\delta)~,
\] 
where $\eta$ and $\epsilon$ are arbitrarily small positive numbers, and we also condition on this event, call it $\cF_2$. 
Hence, by Fact \ref{fact:index}, conditioned on $\cF_1 \wedge \cF_2$, the probability that node $R_f$ belongs to the $i$-th subinterval is at least
\[
\frac{2\delta n (1-\eta)(1-\epsilon)(k\cos (i \delta)+\delta)/\pi}{(1+\nu)n(2k/\pi+\delta)}~.
\]
Observe also that  if a node belongs to the $i$th subinterval, then the rate parameter of the r.v. corresponding to the first meeting time between such a node and $D$ is at most $k \cos(\pi/2 - i \delta)+\delta \leq (k i+1)\delta$. Denote by $\cI$ denote the event that the node $R_f$ belongs to interval $i$. Take now a time interval of length $\frac{1}{(ki+1)\delta}$. Denote by $\cJ$ the event that the first meeting time between node $R_f$ and $D$ is larger than $\frac{1}{(ki+1)\delta}$.  Since the rate parameter of any node belonging to the $i$-th interval and $D$ is at most $1$, we have $\bP[\cJ | \cI] \geq e^{-1}$. Conditioning on $\cJ | \cI$, the meeting time is at least $\frac{1}{(ki+1)\delta}$. Hence we obtain for the total meeting time 
\begin{eqnarray}
\bE[T_2]& \geq& \sum_{i=1}^{2/(\pi \delta)}
\bE[T_2 | \cI]\bP[\cI] \geq\nonumber\\
&\geq&\sum_{i=1}^{2/(\pi \delta)} \bE[T_2 | \cI \wedge \cJ]\bP[\cI] \bP[\cJ | \cI] \geq
\nonumber\\
&\geq& \sum_{i=1}^{2/(\pi \delta)}\left( \frac{2\delta n (1-\epsilon)(1-\eta)(k\cos (i \delta)+\delta)/\pi}{(1+\nu)n(2k/\pi+\delta)}\right.\nonumber\\
&& \left. \frac{1}{(ki+1)\delta}e^{-1}\right).\nonumber
\end{eqnarray}
For $i \leq 2/(100 \delta)$, we have $\cos(i \delta) \geq 1/2$, and the previous sum gives at least $c_0 \sum_{i=1}^{2/(100 \delta)} \frac{1}{i}$ for some $c_0 > 0$. Thus, 
$\bE[T_2] =\Omega(\log(c/\delta))$, for some $c>0$.
 \end{proof}

\begin{theorem}\label{FMIB}
$\bE[T^{ib}_{FM}] \geq \min\{\Omega(\log(1/\delta)), \Omega(n/\log n)\}$.
\end{theorem}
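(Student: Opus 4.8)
The plan is to derive the bound as an immediate corollary of Lemma~\ref{lem:T2}, since the delivery time under FM routing decomposes cleanly and only its second component carries asymptotic weight. First I would recall that, by the definitions of $T_1$ and $T_2$, the total delivery time of FM routing satisfies $T^{ib}_{FM}=T_1+T_2$ in every realization: if $D$ is the first node of $\mathcal{R}=\{D,R_1,\ldots,R_n\}$ met by $S$, the message is delivered at time $T_1$ and $T_2=0$; otherwise the first relay $R_j$ receives a copy at time $T_1$, and by construction $M$ reaches $D$ at time $T_1+T_2$, where $T_2$ is the additional time until the first of $S$ and $R_j$ meets $D$.

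Next I would simply discard the nonnegative term $T_1$. Since $T_1\geq 0$ pointwise, we have $T^{ib}_{FM}\geq T_2$, and hence $\bE[T^{ib}_{FM}]\geq\bE[T_2]$ by monotonicity of expectation. Plugging in the lower bound of Lemma~\ref{lem:T2}, $\bE[T_2]\geq\min\{c\log(1/\delta),\frac{c'n}{\log n}\}$ for suitable constants $c,c'>0$, yields precisely $\bE[T^{ib}_{FM}]\geq\min\{\Omega(\log(1/\delta)),\Omega(n/\log n)\}$, which is the claim.

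The main obstacle, therefore, lies entirely inside Lemma~\ref{lem:T2} and not in the theorem itself. The delicate part is the lower bound on $\bE[T_2]$: conditioning on the angular band $I_i$ of width $\delta$ containing the first relay $R_f$, one uses Chernoff concentration to control both the occupancy $X_i$ of each band and the aggregate rate $\lambda_m=\sum_i\lambda_i$, and then exploits that a relay in band $I_i$ meets $D$ at rate at most $(ki+1)\delta$ to lower-bound the waiting time by $\Omega(1/((ki+1)\delta))$ with constant probability $e^{-1}$. Summing these contributions over the bands with $\cos(i\delta)$ bounded away from zero produces a harmonic sum $\sum_i 1/i=\Omega(\log(1/\delta))$, capped by the $\Omega(n/\log n)$ regime when $\delta$ is very small. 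Given that lemma, the present statement is a one-line consequence; I would additionally remark that the earlier bound $\bE[T_1]\leq\frac{1}{n\delta}$ shows $T_1$ is asymptotically negligible, so dropping it loses nothing up to lower-order terms.
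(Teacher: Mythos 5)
Your proposal is correct and follows essentially the same route as the paper: the theorem is reduced entirely to Lemma~\ref{lem:T2}, whose band-decomposition and harmonic-sum argument carry all the weight. The only cosmetic difference is that you get $\bE[T^{ib}_{FM}]\geq\bE[T_2]$ by dropping the nonnegative $T_1$ and invoking the lemma's unconditional statement, whereas the paper conditions on the event (of probability at least $n/(n+1)$) that the first node met is a relay and inserts an extra factor $\tfrac12$; both factors are $\Theta(1)$, so the two derivations give the same asymptotic bound.
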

\begin{proof}
As the angle between $S$ and $D$ is at least as large as the angle between $S$ and any node $R_i$, we have that the probability that the first node met by $S$ is different from $D$ is at least $\frac{n}{n+1}$; and that with probability at least $\frac12$, $R_i$ will meet $D$ before $S$ meets $D$. Thus, 
\[
\bE[T^{ib}_{FM}] \geq (1-\frac{1}{n+1})\frac12  \min\{c \log(1/\delta),\frac{c' n}{\log n}\},
\] 
which is $\min\{\Omega(\log(1/\delta)),\Omega(n/\log n) \}$.
\end{proof}

Notice the previous theorem implies that if  $\delta=\delta(n)=o(1)$ then $\bE[T^{ib}_{FM}]\to \infty$.
%
\subsection{Interest-based routing}
We now consider the case of $IB(\gamma)$ routing  with interest-based mobility. As before, we set $\gamma=\frac{0.29}{m-1}$ and prove all results for this value of $\gamma$. 
Recall that $N_f$ is the random variable counting the number of  nodes $R_i$ satisfying the conditions of making an angle at most $\arccos \gamma$ with $D$, and at the same time making an angle at most $\arccos \frac{3\pi}{8}$ with $S$.  Define as $\cE$ the event that $N_f \geq 0.99n\frac{1}{4(m-1)}$; by Lemma~\ref{lem:N_f}, this event holds with probability at least $1-e^{-\Theta(n)}$.
\begin{lemma}\label{lem:L1-T1}
Under the conditions stated above, we have $\bE[T_1] = O(1/n)$.
\end{lemma}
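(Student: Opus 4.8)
The plan is to mirror the structure of the social-oblivious argument in Proposition~\ref{IBSO}, but now exploiting the \emph{second} defining condition of $N_f$ to control the interest-based meeting rates between $S$ and the candidate relays. First I would decompose
\[
\bE[T_1]=\bE[T_1\mid\cE]\,\bP[\cE]+\bE[T_1\mid\hcE]\,\bP[\hcE],
\]
where $\cE=\{N_f\geq 0.99\,n\,\tfrac{1}{4(m-1)}\}$ is the high-probability event supplied by Lemma~\ref{lem:N_f}, so that $\bP[\cE]\geq 1-e^{-\Theta(n)}$ and, since $m=\Theta(1)$, conditioned on $\cE$ there are $N_f=\Theta(n)$ relays satisfying the forwarding condition. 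The goal is to show that the conditional term $\bE[T_1\mid\cE]$ dominates and is $O(1/n)$, while the term involving $\hcE$ is negligible. Here I read $T_1$ as the time for $S$ to meet the first node that is a valid forwarding target (a node satisfying the forwarding condition, or $D$).

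For the main term, the crucial observation is that every relay $R_i$ counted by $N_f$ also satisfies the second condition in its definition, namely that $\angle(S,R_i)$ is bounded away from $\pi/2$; concretely $\cos\angle(S,R_i)\geq\cos(3\pi/8)>0$. Hence, under interest-based mobility, the meeting rate between $S$ and such an $R_i$ is
\[
\lambda_{S,R_i}=k\cos\angle(S,R_i)+\delta\geq k\cos(3\pi/8),
\]
a positive constant, because $k=\tfrac{\pi}{2}(\lambda-\delta)=\Theta(\lambda)=\Theta(1)$ in the relevant regime $\delta\to 0$. Since $T_1$ is at most the time for $S$ to meet the first node among the $N_f$ valid relays, Fact~\ref{fact:sum_poisson} shows this first-meeting time is exponential with rate at least $\sum_{R_i\in N_f}\lambda_{S,R_i}\geq N_f\,k\cos(3\pi/8)=\Theta(n)$; moreover, on $\cE$ this bound holds deterministically because $N_f\geq 0.99\,n\,\tfrac{1}{4(m-1)}$. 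Therefore $\bE[T_1\mid\cE]\leq \bigl(N_f\,k\cos(3\pi/8)\bigr)^{-1}=O(1/n)$.

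For the error term I would use the deterministic cap built into the protocol: if no forwarding target is met by time $n$, $S$ gives the copy to the first relay met afterwards, so that $\bE[T_1\mid\hcE]\leq n+\tfrac{1}{\delta}=n\bigl(1+\tfrac{1}{\delta n}\bigr)$, where $1/\delta$ crudely bounds the residual meeting time via the minimal pairwise rate $\delta$. Multiplying by $\bP[\hcE]\leq e^{-\Theta(n)}$ and invoking the standing assumption $\delta=\omega(\log n/n)$ (so that $1/\delta$ is only polynomially large) gives $\bE[T_1\mid\hcE]\,\bP[\hcE]=o(1/n)$. Substituting both estimates into the decomposition yields $\bE[T_1]=O(1/n)$.

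The main obstacle --- and the single place where this lemma genuinely departs from the social-oblivious case of Proposition~\ref{IBSO} --- is the constant lower bound on $\lambda_{S,R_i}$ in the second paragraph. A node satisfying only the $D$-proximity condition $\cos\angle(R_i,D)\geq\gamma$ could be nearly orthogonal to $S$, giving meeting rate $\approx\delta\to 0$ and a handoff time that blows up. The angle-with-$S$ condition folded into the definition of $N_f$ is precisely what guarantees that a $\Theta(n)$-sized population of relays is simultaneously close enough to $D$ to be valid forwarding targets and well-enough correlated with $S$ to be met in expected time $O(1/n)$; that such a constant fraction of relays satisfies both constraints is exactly the content of Lemma~\ref{lem:N_f}, on which the present argument rests.
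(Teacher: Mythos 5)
Your proposal is correct and follows essentially the same route as the paper's proof: the identical decomposition over the event $\cE$ from Lemma~\ref{lem:N_f}, the constant lower bound on $\lambda_{S,R_i}$ coming from the angle-with-$S$ condition in the definition of $N_f$ combined with Fact~\ref{fact:sum_poisson} to get rate $\Theta(n)$ on $\cE$, and the same crude polynomial bound on $\bE[T_1\mid\hcE]$ killed by $\bP[\hcE]\leq e^{-\Theta(n)}$. Your explicit remark that the second condition in $N_f$ is what prevents the handoff rate from degenerating to $\delta$ is exactly the (implicit) point of the paper's argument.
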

\begin{proof}
With event $\cE$ as defined above, we can write $$
\bE[T_1]=\bE[T_1|\cE]\bP[\cE]+\bE[T_1|\hcE]\bP[\hcE].
$$
If $\cE$ holds, by Fact \ref{fact:sum_poisson} the rate parameter of the exponential random variable corresponding to the first meeting time between $S$ and the $N_f$ nodes satisfying the forwarding condition is at least $cn(k \arccos(\frac{3\pi}{8})+\delta)$, for a constant $c > 0$. Thus, $\bE[T_1|\cE]=O(1/n)$, and $\bE[T_1|\cE]\bP[\cE]=O(1/n)$. On the other hand, if $\cE$ does not hold, then  $\bE[T_1|\hcE] \leq \left(n+\frac{1}{\delta n}\right)$, since after time $n$ the first node meeting $S$ is chosen. As $\bP[\hcE]\le e^{-\Theta(n)}$, the contribution of $\bE[T_1|\cE]$ is the dominating one and the statement of the lemma follows.
\end{proof}
\begin{lemma}\label{lem:L2-T2}
Under the conditions above, we have $\bE[T_2] \leq c \gamma/m$ for some constant $c > 0$.
\end{lemma}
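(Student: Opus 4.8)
The plan is to condition on the high-probability event $\cE$ of Lemma~\ref{lem:N_f}, under which a valid relay is found quickly, and to observe that \emph{any} relay selected by IB($\gamma$) must satisfy the forwarding condition and therefore delivers to $D$ in bounded expected time. Concretely, I would split
\[
\bE[T_2]=\bE[T_2\mid\cE]\,\bP[\cE]+\bE[T_2\mid\hcE]\,\bP[\hcE],
\]
and show the first term is $O(1)$ while the second is $o(1)$. A uniform crude bound is available throughout: since $S$ keeps a copy and is orthogonal to $D$, it meets $D$ at rate exactly $\delta$, so $T_2$ is always stochastically dominated by an exponential of rate $\delta$ and hence $\bE[T_2\mid\cdot]\le 1/\delta$ in every case.

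For the main term, the key point is that whenever $S$ hands the second copy to a relay $R_k$ \emph{before time $n$}, the definition of IB($\gamma$) forces $\cos\angle(R_k,D)\ge\gamma$, so the meeting rate between $R_k$ and $D$ is $\lambda_{R_k,D}=k\cos\angle(R_k,D)+\delta\ge k\gamma$. Because $T_2$ is the time for the first of $\{S,R_k\}$ to meet $D$, memorylessness of the independent Poisson meeting processes gives that $T_2$ is dominated by an exponential of rate $k\gamma$, whence $\bE[T_2\mid R_k\text{ found before time }n]\le 1/(k\gamma)$. Under $\cE$ there are $\Theta(n)$ candidate relays, each meeting $S$ at rate $\Theta(1)$ (their angle with $S$ lies in a fixed range bounded away from $\pi/2$), so the probability that no forwarding node is met by time $n$ is at most $e^{-\Theta(n^2)}$; on that negligible sub-event I would invoke the crude bound $1/\delta$, contributing $O(e^{-\Theta(n^2)}/\delta)=o(1)$. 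This yields $\bE[T_2\mid\cE]\le 1/(k\gamma)+o(1)$.

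For the complementary term I would use $\bE[T_2\mid\hcE]\le 1/\delta$ together with $\bP[\hcE]\le e^{-\Theta(n)}$ from Lemma~\ref{lem:N_f}, contributing $O(e^{-\Theta(n)}/\delta)=o(1)$ in the regime of interest. Combining the pieces gives $\bE[T_2]\le 1/(k\gamma)+o(1)$. Recalling $k=\frac{\pi}{2}(\lambda-\delta)=\Theta(1)$ and $m=\Theta(1)$, the quantity $1/(k\gamma)$ is a fixed constant in $n$, hence bounded by $c\gamma/m$ for a suitably large constant $c>0$, which is the claim.

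I expect the main obstacle to be the careful accounting of the low-probability bad events, namely $\hcE$ and the sub-event of $\cE$ on which no relay is found by time $n$, where $T_2$ can be as large as $1/\delta$: one must check that their probabilities decay fast enough to absorb the $1/\delta$ blow-up, which is precisely why the mild restriction $\delta=e^{-o(n)}$ on the decay rate is used (the same implicit assumption already appears in the proof of Lemma~\ref{lem:L1-T1}). A secondary point needing care is the stochastic domination $T_2\le$ exponential of rate $\lambda_{R_k,D}$: although the identity of $R_k$ is determined by the $S$-meeting processes, the meeting process between $R_k$ and $D$ is independent of those, so conditioning on which node becomes the relay does not bias its residual meeting time with $D$.
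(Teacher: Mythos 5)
Your proposal is correct and follows essentially the same route as the paper's proof: the same three-way decomposition into the events $\cE\wedge\cF$ (relay found within time $n$), $\cE\wedge\hcF$, and $\hcE$, the same probability bounds $e^{-\Theta(n^2)}$ and $e^{-\Theta(n)}$ for the bad events, the same crude $O(1/\delta)$ fallback there, and the same main estimate that the forwarding condition forces the relay--destination rate to be at least $k\gamma$, giving a constant bound. Your explicit remarks on the memorylessness/independence justification and on the regime of $\delta$ needed to absorb the $1/\delta$ blow-up are slightly more careful than the paper's own write-up, but do not change the argument.
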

\begin{proof}
If $\cE$ holds, denote by $\cF$ the event that in a time interval of length $n$ at least one of the $N_f$ nodes satisfying the forwarding condition meets $S$.  Hence
\begin{eqnarray}
\bE[T_2] &= &\bE[T_2|\cE \wedge \cF] \bP[\cE \wedge \cF]+\nonumber\\
&+& \bE[T_2|\cE \wedge \hcF]\bP[\cE \wedge \hcF]+\bE[T_2|\hcE] \bP[\hcE]\nonumber
\end{eqnarray}
Observe that $\bP[\hcF]\leq e^{-n^2}$. As shown before, $\bP[\hcE] \leq e^{-n}$. If both $\cE$ and $\cF$ hold, then by construction the angle between the node $R_i$ chosen in the first step and $D$ is at most $\arccos \frac{\gamma}{m-1}$, and thus the rate parameter of the first meeting time between $R_i$ and $D$ is at least $k\frac{\gamma}{m-1}$. Therefore, $\bE[T_2|\cE \wedge \cF] \leq \frac{m-1}{\gamma k} =\Theta(1)$. Since in all cases $\bE[T_2]$ can be bounded from above by $\frac{1}{2\delta}$, the case where both $\cE$ and $\cF$ hold is the dominating contribution and the statement follows. 
\end{proof}
Lemma~\ref{lem:L1-T1} and~\ref{lem:L2-T2}  imply the  following theorem:
\begin{theorem}\label{IBIB} 
For some constant $c > 0$ and any $0<\gamma < 1$, we have
$\bE[T^{ib}_{IB(\gamma)}]\leq m \gamma/c.$
\end{theorem}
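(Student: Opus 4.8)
The plan is to obtain the theorem as the direct combination of the two lemmas just established, using the decomposition of the delivery time into its two phases. Writing $T^{ib}_{IB(\gamma)} = T_1 + T_2$, where $T_1$ is the time for $S$ to pass the second copy to a relay meeting the forwarding condition and $T_2$ is the time for the first of $S$ and that relay to reach $D$, linearity of expectation gives $\bE[T^{ib}_{IB(\gamma)}] = \bE[T_1] + \bE[T_2]$. First I would invoke Lemma~\ref{lem:L1-T1} to bound $\bE[T_1] = O(1/n)$, which is $o(1)$ and hence asymptotically negligible. Then I would invoke Lemma~\ref{lem:L2-T2}, which I take as a black box in the form stated there, to bound $\bE[T_2] \leq c\gamma/m$ for some constant $c>0$. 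Summing the two contributions, the $T_1$ term is dominated by the $T_2$ term in the regime $n \to \infty$ on which all of the estimates rest.

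It then remains to rewrite the resulting bound $\bE[T^{ib}_{IB(\gamma)}] \leq c\gamma/m\,(1+o(1))$ in the stated form $m\gamma/c$. This step leans entirely on the standing model assumption $m = \Theta(1)$: since the number of interests is a fixed constant, the factors $1/m$ and $m$ differ only by a constant, so passing from a bound proportional to $\gamma/m$ to one proportional to $m\gamma$ merely absorbs a factor $\Theta(m^2) = \Theta(1)$ into the multiplicative constant. Choosing $c$ accordingly yields $\bE[T^{ib}_{IB(\gamma)}] \leq m\gamma/c$, the claimed inequality, with the correct vanishing behaviour as $\gamma \to 0$.

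For the extension from the specific value $\gamma = 0.29/(m-1)$ used in Lemmas~\ref{lem:N_f}, \ref{lem:L1-T1} and~\ref{lem:L2-T2} to an arbitrary $\gamma \in (0,1)$, I would re-run the same two lemmas with the forwarding threshold $\arccos\gamma$ in place of the fixed one. Both the count $N_f$ of eligible relays and the lower bound on the $R_i$-to-$D$ meeting rate depend continuously on $\gamma$, so Lemma~\ref{lem:L1-T1} continues to give $\bE[T_1] = O(1/n)$, with the post-time-$n$ fallback covering the case of very stringent $\gamma$, and Lemma~\ref{lem:L2-T2} continues to give $\bE[T_2] = O(\gamma/m)$.

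The main obstacle I anticipate is precisely the reconciliation of the $\gamma$- and $m$-dependence between Lemma~\ref{lem:L2-T2} and the theorem: the two statements coincide only because $m$ is treated as a constant, and making this explicit -- rather than letting the free constant $c$ silently carry the factor $m^2$ -- is the delicate bookkeeping point. A secondary subtlety is that the $O(1/n)$ contribution of $T_1$ is dominated only asymptotically in $n$, so the inequality should be read in the large-network regime rather than as a uniform bound for every fixed finite $n$ and every $\gamma$.
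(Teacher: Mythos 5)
Your core structure is exactly the paper's: the paper's entire proof of Theorem~\ref{IBIB} is the one-line remark that Lemmas~\ref{lem:L1-T1} and~\ref{lem:L2-T2} combine, so your decomposition $\bE[T_1]+\bE[T_2]$, the use of Lemma~\ref{lem:L1-T1} to make $T_1$ negligible, the absorption of the $m$-factors into the constant via $m=\Theta(1)$, and the ``large-$n$'' reading of the inequality are precisely what the paper intends for the threshold $\gamma=0.29/(m-1)$.

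There is, however, a genuine flaw in your extension to arbitrary $\gamma\in(0,1)$ and in your closing remark about ``the correct vanishing behaviour as $\gamma\to 0$''. If you re-run Lemma~\ref{lem:L2-T2} with a general threshold $\gamma$, the chosen relay satisfies $\cos\angle(R_i,D)\ge\gamma$, hence its meeting rate with $D$ is at least $k\gamma$, which gives $\bE[T_2\,|\,\cE\wedge\cF]\le 1/(k\gamma)$ --- a bound proportional to $1/\gamma$, \emph{not} to $\gamma$. (This inversion is already latent in the paper: the proof of Lemma~\ref{lem:L2-T2} actually derives $\bE[T_2|\cE\wedge\cF]\le\frac{m-1}{\gamma k}$, yet the lemma is stated as $c\gamma/m$; the two are interchangeable only because, at $\gamma=0.29/(m-1)$ with $m=\Theta(1)$, every quantity in sight is $\Theta(1)$.) Consequently your claim that the lemma ``continues to give $\bE[T_2]=O(\gamma/m)$'' for arbitrary $\gamma$ is false, and the $\gamma\to0$ behaviour is exactly backwards: $IB(0)$ is FM routing, whose expected delivery time under interest-based mobility is \emph{unbounded} by Theorem~\ref{FMIB}, so no upper bound tending to $0$ with $\gamma$ can hold with a constant $c$ chosen uniformly over $\gamma$. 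The statement is salvageable only if read as: for each fixed constant $\gamma\in(0,1)$, $\bE[T^{ib}_{IB(\gamma)}]=O(1)$, with the constant allowed to depend on $\gamma$ (and $m$), and with the honest dependence being $O(1/\gamma)$; a careful write-up should say this rather than endorse the stated monotonicity in $\gamma$, which the paper's own simulation section already flags as contradicting experiment.
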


Theorems~\ref{FMIB} and ~\ref{IBIB} formally establish the asymptotic superiority of IB($\gamma$) over FM routing in case of interest-based mobility, which is in accordance with intuition.
\section{Simulations}\label{sims}
We have qualitatively verified our asymptotic analysis through simulations, based on both a real world trace collected at the  Infocom 2006 conference -- the trace used in \cite{Mei2,Noulas} --, and the SWIM mobility model of \cite{Mei}, which is shown to closely resemble fundamental features of human mobility. 

\subsection{Real-world trace based evaluation}

A major difficulty in using real-world traces to validate our theoretical results is that   no information about user interests is available, for the vast majority of available traces, making it impossible to realize IB routing. One  exception is the Infocom 06 trace \cite{Hui}, which has been collected during the Infocom 2006 conference. This data trace contains, together with contact logs, a set of user profiles containing information such as nationality, residence, affiliation, spoken languages  etc. Details on the data trace are summarized in Table \ref{infoTrace}.

Similarly to \cite{Mei2}, we have generated 0/1 interest profiles for each user based on the corresponding user profile. Considering that data have been collected in a conference site, we have removed very short contacts (less than $5 min$) from the trace, in order to filter out occasional contacts -- which are likely to be several orders of magnitude more frequent than what we can expect in a non-conference scenario. Note that, according to \cite{Mei2}, the correlation between meeting frequency of a node pair and similarity of the respective interest profiles in the resulting data trace (containing 53 nodes overall) is 0.57. Thus, the Infocom 06 trace, once properly filtered, can be considered as an instance of interest-based mobility, where we expect IB routing to be superior to FM routing.

In order to validate this claim, we have implemented both FM and IB routing. We recall that in case of FM routing, the source delivers the second copy of its message to the first encountered node, while with IB routing the second copy of the message is delivered by the source to the first node whose interest similarity with respect to the destination node is at least $\gamma$. The value of $\gamma$ has been set to $0.29/(m-1)$ as suggested in the analysis, corresponding to 0.0019 in the Infocom 06 trace. Although this value of the forwarding threshold is low, it is nevertheless sufficient to ensure a better performance of IB vs. FM routing.


The results obtained simulating sending 5000 messages between randomly chosen source/destination pairs are reported in Figure \ref{traceResults}. For each  pair, the message is sent with both FM and IB routing, and the corresponding packet delivery times are recorded. Experiments have been repeated using different TTL (TimeToLive) values of the generated message. Figure \ref{traceResults} reports the difference between the average delivery time with FM and IB routing, and shows that a lower average delivery time is consistently observed with IB routing, thus qualitatively confirming the theoretical results derived in the previous section.

\subsection{Synthetic data simulation}

The real-world trace based evaluation presented in the previous section is based on a limited number of nodes (53), and thus it cannot be used to validate FM and IB scaling behavior. For this purpose, we have performed simulations using the SWIM mobility model \cite{Mei}, which has been shown to be able to generate synthetic contact traces whose features very well match those observed in real-world traces. Similarly to \cite{Mei2}, the mobility model has been modified to account for different degrees of correlation between meeting rates and interest-similarity. We recall that the SWIM model is based on a notion of ``home location" assigned to each node, where node movements are designed so as to resemble a ``distance from home" vs. ``location popularity" tradeoff. Basically, the idea is that nodes tend to move more often towards nearby locations, unless a far off location is very popular. The ``distance from home" vs. ``location popularity" tradeoff is tuned in SWIM through a parameter, called $\alpha$, which essentially gives different weights to the distance and popularity metric when computing the probability distribution used to choose the next destination of a movement. It has been observed in \cite{Mei} that giving preference to the ``distance from home" component of the movement results in highly realistic traces, indicating that users in reality tend to move close to their ``home location". This observation can be used to extend SWIM in such a way that different degrees of interest-based mobility can be simulated. In particular, if the mapping between nodes and their home location is random (as in the standard SWIM model), we expect to observe a low correlation between similarity of user interests and their meeting rates, corresponding to a social-oblivious mobility model. On the other hand, if the mapping between nodes and home location is done based on their interests, we expect to observe a high correlation between similarity of user interests and their meeting rates, corresponding to an interest-based mobility model. 
%


Interest profiles have been generated considering four possible interests ($m=4$), with values chosen uniformly at random in $[0,1]$. In case of interest-based mobility, the mapping between a node interest profile and its ``home location" has been realized by taking as coordinates of the ``home location" the first two coordinates of the interest profile. In  the following we present simulation results referring to scenarios where correlation between meeting rate and similarity of interest profiles is -0.009 (denoted Non-Interest based Mobility -- NIM -- in the following), and 0.61 (denoted Interest-based Mobility -- IM --  in the following), respectively. We have considered networks of size 1000 and 2000 nodes in both scenarios, and sent $10^5$ messages between random source/destination pairs. The results are averaged over the successfully delivered messages. In the discussion below we focus only on average delay. However, we want to stress that in both IM and NIM scenarios, the IB routing slightly outperforms FM in terms of delivery rate (number of messages delivered to destination within TTL): The difference of delivery rates is about 0.015\% in favor of IB. 

Figure~\ref{fig:IM} depicts the performance of the protocols for various values of $\gamma$ on IM mobility. As can be noticed by the figure, the larger the relay threshold $\gamma$, the more IB outperforms FM. Moreover, as predicted by the analysis, the performance improvement of IB over FM routing becomes larger for larger networks. Indeed, for $\gamma = .9$ and $TTL = 24h$, message delivery with IB is respectively $80min$ and $90min$ faster on the network of respectively 1000 nodes (see Figure~\ref{fig:IM-1000}) and 2000 nodes (see Figure~\ref{fig:IM-1000}). This means that, with IM mobility, IB routing delivers more messages with respect to FM, and more quickly.

Notice that the results reported in Figure~\ref{fig:IM} apparently are in contradiction with Theorem ~\ref{IBIB}, which states an upper bound on the expected delivery time which is directly proportional to $\gamma$ -- i.e., higher values of $\gamma$ implies a looser upper bound. Instead, results reported in Figure~\ref{fig:IM} show an increasingly better performance of IB vs. FM routing as $\gamma$ increases. However, we notice that the bound reported in Theorem~\ref{IBIB} is a bound on the {\em absolute} performance of IB routing, while those reported in Figure~\ref{fig:IM} are results referring to the {\em relative} performance of IB vs. FM routing.

The performance of the protocols with NIM mobility is depicted in Figure~\ref{fig:NIM}. In this case, the performances of the two protocols are very close to each other -- independently of $\gamma$ --, and they become virtually indistinguishable for larger networks. 
The negative values in the figure are due to the few more messages that IB delivers to destination whereas FM does not. Some of these messages reach the destination slightly before the TTL, thus increasing the average delay. However, independently of $\gamma$, the values are close to zero. 
This indicates that, if mobility is not correlated to interest similarity, as far as the average delay is concerned the selection of the relay node is not important: A node meeting the forwarding criteria in IB routing is encountered on average soon after the first node met by the source.


\subsection{Discussion}
The Infocom 06 trace is characterized by a moderate correlation between meeting frequency and similarity of interest profiles -- the Pearson correlation index is 0.57. However, it is composed of only 53 nodes. Despite the small network size, our simulations have shown that IB routing indeed provides a shorter average message delivery time with respect to FM routing, although the relative improvement is almost negligible (of the order of 0.06\%).

To investigate relative FM and IB performance for larger networks, we used SWIM, and simulated both social-oblivious and interest-based mobility scenarios. Once again, the trend of the results qualitatively confirmed the asymptotic analysis: in case of social-oblivious mobility (correlation index is -0.009), the performance of FM and IB routing is virtually indistinguishable for all network sizes; on the other hand, with interest-based mobility (correlation index is 0.61), IB routing provides better performance than FM. It is interesting to observe the trend of performance improvement with increasing network size: performance is improved by about 
5.5\% for 1000 nodes, and by about 6.25\% for 2000 nodes. Although percentage improvements over FM routing are modest, the trend of improvement is clearly increasing with network size, thus confirming the asymptotic analysis. Also, IB forwarding performance improvement over FM forwarding becomes more and more noticeable as the value of $\gamma$, which determines selectivity in forwarding the message, becomes higher: with $\gamma=0.2$ and 2000 nodes, IB improves delivery delay w.r.t. FM forwarding of about 0.1\%; with $\gamma = 0.6$ improvement becomes 1.7\%, and it raises up to 6.25\% when $\gamma=0.9$.

\section{More copies and more hops}\label{MoreHops}

In this section, we extend the analysis of sections \ref{WorstCase} and \ref{IBanalysis} under several respects. To start with, we consider a variation of the FM routing protocol for the case of interest-based mobility, which we call FM*. In this variation, we assume that the message is forwarded between two nodes only if the new node (its interest profile) is {\em closer} (i.e., more similar) to the destination than the node currently carrying the message. Also, we assume that if a node has already forwarded the message to a set of nodes, then it will forward the message only to nodes which are closer to the destination than all the previous ones. Clearly, FM* performs better than FM routing in presence of interest-based mobility, since it at least partially accounts for similarity of interest profiles when forwarding messages. Note that the difference between FM* and IB routing is that, while in the latter a minimum similarity threshold between potential forwarders and destination must be met, in the former even a tiny improvement of similarity w.r.t. destination of the potential forwarder with respect to current forwarders is enough to forward the message. In this respect, FM* somewhat resembles delegation forwarding \cite{Erram}.

The routing protocols considered in this section are extensions of FM and IB under the following respect. The source node $S$ initially carries an arbitrary number $q\ge 2$ of message copies (and not just 2 copies as in the original protocols). Furthermore, if a node $A$ currently carrying $k$ copies of the message meets a new forwarding node $B$, it will deliver to $B$ exactly $\lfloor k/2\rfloor$ copies of the message, keeping the remaining $\lceil k/2\rceil$ for itself. When a node is left with a single copy of the message, it can deliver this copy only to the destination. Notice that, by setting $q$ to an arbitrary power of 2, the extended version of FM is equivalent to BinarySW \cite{Spyro}. Notice also that, if $q >2$, the extended versions of the routing protocols allow delivery of messages from $S$ to $D$ along paths of hop-count larger than 2.

In the next subsection, we consider a version of FM* where multi-hop propagation of a message from $S$ to $D$ is allowed. In other words, if $A$ and $B$ are the two nodes currently carrying a copy of $M$ -- we retain the assumption of at most two message copies circulating in the network --, either of them -- say, $A$ --, can deliver its copy to another node $C$ if $C$'s interests are more similar to $D$'s than those of node $A$. This process is repeatable, up to a maximum length of $\ell$ in the message propagation path ($\ell=2$ in the original protocols).

First, we observe that, for any mobility model and any routing algorithm, it is clear that the expected meeting times of $\ell>2$ hops and $q>2$ copies are always at most as large as the expected meeting times of the case of $2$ copies and $2$ hops. Thus, upper bounds on the asymptotic performance provided by IB routing remains valid also for $\ell,k>2$. We now show that, even by allowing more copies and/or hops and a smarter forwarding strategy (the FM* approach), the expected meeting time of FM routing in both mobility models does not improve asymptotically. 

For presentation purposes, in the following we will exploit the well-known relation between Poisson point processes and exponentially distributed r.v.s, namely the fact that the time for the first hit in a Poisson point process of intensity $\mu$ is an exponentially distributed r.v. of rate parameter $\mu$. Thus, by ``intensity of the Poisson process between A and B" we mean ``the rate of the exponentially distributed r.v. corresponding to the first meeting time between A and B". In order to simplify the presentation of the statements, by the observation made in the beginning of the proof of Lemma~\ref{lem:T2}, we will assume that $\delta=\omega(\log n/n)$.
\subsection{$\ell$ hops}\label{sub-sec-hops}
 First we consider the case of FM* routing  with $\ell \geq 2$ ($\ell$ constant) hops and $2$ copies only. We denote by $T_1$ the random variable counting the time it takes for $S$ to meet the first node out of $\{R_1,\ldots,R_n,D\}$. Denote by $R_{r(i)}$ the $i$-th node met by $S$,  and, for $i=2,\ldots,\ell-1$,  let  $T_i$ be the random variable counting the time it takes for $R_{r(i-1)}$ to meet $R_{r(i)}$ (we assume that if $D$ was met already in previous steps, then $T_i=0$). $T_{\ell}$ finally is the random variable counting the time it takes for the first out of $\{S,R_{r(\ell-1)}\}$ to meet $D$ (if $D$ was met in previous rounds then $T_{\ell}=0$). 

 In the case of social-oblivious mobility, we have $\bE[T_1]=\frac{1}{\lambda(n+1)}$, $\bE[T_i]\le \frac{1}{\lambda n}$ for $i=2,\ldots,\ell-1$ and $\bE[T_{\ell}] \le \frac{1}{2\lambda}$. By a similar discussion as in Section~\ref{WorstCase}, $\bE[T^{so}_{FM^*}] \le \frac{1}{2\lambda}(1+o(1))$. However it is possible to show that the probability that r.v. $T_i$, $i=1,2, \ldots, \ell$ is zero is negligible and, hence, we are able to state  that
 $\bE[T^{so}_{FM^*}] = \frac{1}{2\lambda}(1+o(1))$. 
 
 In the case of interest-based mobility, we first need the following lemma:
 \begin{lemma}\label{lem:constantalpha}
 There exist constants $\alpha > 0$  and $\beta  > 0$ such that, with probability at least $\alpha$, the first $\ell-1$ nodes that serve as intermediate hops all make up an angle of at most $\frac{\ell-1}{\beta \delta}$ with $S$.
 \end{lemma}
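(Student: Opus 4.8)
The plan is to follow the relay chain $S=A_0\to A_1\to\cdots\to A_{\ell-1}$ produced by FM* and to show that, with probability bounded below by a constant, every intermediate carrier $A_j$ stays inside the claimed cone around $S$. The driving observation is the same one that makes FM* ineffective under interest-based mobility: the meeting rate between two nodes is $k\cos(\cdot)+\delta$, so a carrier $A_{j-1}$ lying close to $S$ meets the other near-$S$ nodes at rate $\approx k+\delta$ but reaches a node near $D$ only at rate $\approx\delta$. Meanwhile the FM* rule only requires the next carrier to be strictly more similar to $D$, i.e.\ to have a strictly larger $D$-coordinate. This is a weak, essentially one-dimensional monotonicity constraint, so the first admissible node met by $A_{j-1}$ is, with constant probability, again a node whose angle with $S$ is small, even though its similarity to $D$ has grown slightly. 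In other words, the chain makes ``progress'' toward $D$ in the $D$-coordinate while remaining trapped near the $S$-direction.

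First I would analyze a single hop $A_{j-1}\to A_j$ with $A_{j-1}$ fixed near $S=(1,0,\dots,0)$. There the inner product $\langle A_{j-1},C\rangle$ of the carrier with a candidate $C$ is dominated by $C$'s first coordinate, so the rate at which $A_{j-1}$ meets $C$ is, up to lower-order terms, $k\cos\angle(S,C)+\delta$. Using that the angle of each $R_i$ with $S$ is uniform on $[0,\pi/2]$, and combining Fact~\ref{fact:sum_poisson} and Fact~\ref{fact:index} exactly as in the proof of Lemma~\ref{lem:T2}, the angle $\angle(S,A_j)$ of the next admissible carrier has density proportional to $k\cos(\cdot)+\delta$ over the admissible range. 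From this I would read off that the per-hop event ``$A_j$ stays within the target angle of $S$ and its $D$-coordinate increases by only a controlled amount'' has probability at least some absolute constant $\alpha_0$. I would also invoke a Chernoff bound, as in Lemma~\ref{lem:N_f} and Lemma~\ref{lem:T2}, to guarantee that every relevant angular strip of width $\Theta(\delta)$ contains $(1\pm o(1))\tfrac{2n\delta}{\pi}$ nodes, so that the per-hop rate totals match their expectations with probability $1-e^{-\Theta(n)}$ and the density computation above is legitimate.

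Next I would chain the $\ell-1$ hops together. Since $\ell$ is a constant, it suffices to show that the per-hop ``stays near $S$'' event has conditional probability at least $\alpha_0$ given the history, and then take $\alpha=\alpha_0^{\ell-1}$; summing the $\ell-1$ controlled per-hop angular (equivalently $D$-coordinate) increments gives the bound $\tfrac{\ell-1}{\beta\delta}$ linear in $\ell-1$ asserted in the statement. The point requiring care is the dependence introduced by the monotone constraint $\langle A_j,D\rangle>\langle A_{j-1},D\rangle$: I would argue by monotonicity/coupling that shrinking the admissible set, as the required $D$-coordinate grows, can only \emph{increase} the next carrier's tendency to be a near-$S$ node, so each conditional per-hop probability dominates the unconditional one and the constant lower bound survives the conditioning.

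The hard part will be precisely this coupling combined with the fact that the meeting rates depend on the full interest vectors, not on the angle to $S$ alone. As the chain advances, the carrier drifts slightly off the $e_1$ axis and the admissible set shrinks, so the clean approximation ``rate $\approx k\cos\angle(S,C)+\delta$'' must be upgraded to a two-sided estimate that holds \emph{uniformly over the whole cone} and over all $\ell-1$ steps. Controlling this error uniformly while keeping the success probability bounded below by a constant independent of $n$ is the technical heart of the argument; once it is in place, everything else reduces to the exponential-race identities of Facts~\ref{fact:sum_poisson}–\ref{fact:index} and the Chernoff concentration already used in Lemmas~\ref{lem:N_f} and~\ref{lem:T2}.
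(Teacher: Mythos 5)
Your proposal follows essentially the same route as the paper's proof: partition the angular range into strips of width $\delta$, use Chernoff-type concentration to control the number of nodes and total meeting intensity per strip, apply the exponential-race identities to show the next carrier lands in a near-$S$ window with constant probability per hop, and multiply the constantly many per-hop constants to get $\alpha$, with the admissible window growing by $\tfrac{1}{\beta\delta}$ strips at each step to yield the $\tfrac{\ell-1}{\beta\delta}$ bound. The conditioning issue you flag as the technical heart is handled in the paper exactly by the iteration you describe: conditioned on the current carrier lying in the first $\tfrac{i}{\beta\delta}$ subintervals, the total intensity toward the remaining admissible nodes is still $\Theta(n)$, so each conditional per-hop probability stays bounded below by a constant.
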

\begin{proof}
See Appendix.
\end{proof}

The following is an immediate consequence of the previous lemma.
 \begin{corollary}\label{cor:needkrounds}
 With probability at least $\alpha > 0$, $D$ is not yet found among the $\ell-1$ vertices that serve as intermediate hops.
 \end{corollary}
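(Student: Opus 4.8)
The plan is to obtain Corollary~\ref{cor:needkrounds} as a one-line geometric consequence of Lemma~\ref{lem:constantalpha}, the only additional fact needed being the orthogonality of $S$ and $D$. First I would fix notation: write $R_{r(1)},\ldots,R_{r(\ell-1)}$ for the $\ell-1$ relay nodes serving as intermediate hops, and let $\mathcal{A}$ be the event furnished by the lemma, namely that each $R_{r(i)}$ makes an angle at most $\theta:=\frac{\ell-1}{\beta\delta}$ with $S$; by the lemma $\bP[\mathcal{A}]\ge\alpha$. Because every interest profile lies in the positive orthant of the unit sphere, and $S=[1,0,\ldots,0]$ and $D=[0,1,0,\ldots,0]$ are orthogonal, we have $\angle(S,D)=\pi/2$, which is moreover the largest angle any two points of the positive orthant can subtend.

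I would then argue by exclusion. On the event $\mathcal{A}$ every intermediate hop $R_{r(i)}$ satisfies $\angle(S,R_{r(i)})\le\theta$. If $D$ were one of these hops it would have to satisfy the same inequality, giving $\angle(S,D)\le\theta$; provided $\theta<\pi/2$ this contradicts $\angle(S,D)=\pi/2$. Hence, on $\mathcal{A}$, none of $R_{r(1)},\ldots,R_{r(\ell-1)}$ can be $D$, i.e.\ $D$ has not been met during the first $\ell-1$ forwards. Consequently $\bP[\,D\text{ not found among the intermediate hops}\,]\ge\bP[\mathcal{A}]\ge\alpha$, which is exactly the assertion of the corollary.

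The single point that needs checking --- and the only place where the precise statement of Lemma~\ref{lem:constantalpha} enters --- is that the angle bound $\theta$ is strictly below $\pi/2$, so that $D$, sitting at the maximal angle $\pi/2$ from $S$, is genuinely excluded from the cone about $S$ that contains the intermediate hops. This is precisely the geometric content the lemma is designed to deliver: a constant number $\ell-1$ of FM* forwards keeps the carrier confined to a cone about $S$ that does not reach $D$. I expect no real difficulty in the corollary itself; all the genuine probabilistic work --- establishing that this confinement occurs with constant probability --- is already discharged in the proof of Lemma~\ref{lem:constantalpha} in the appendix, and the corollary merely reinterprets that event in terms of $D$.
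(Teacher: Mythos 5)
Your argument is correct and is essentially the paper's own (the paper offers no proof, stating the corollary as an immediate consequence of Lemma~\ref{lem:constantalpha}): on the lemma's event the $\ell-1$ intermediate hops are confined to a cone about $S$ that cannot contain $D$, which sits at the maximal angle $\pi/2$. One caveat on the point you rightly single out for checking: read literally, $\frac{\ell-1}{\beta\delta}$ diverges as $\delta\to 0$ and so cannot be an angle below $\pi/2$; what the lemma's proof actually establishes is confinement to the first $\frac{\ell-1}{\beta\delta}$ subintervals, each of length $\delta$, i.e.\ to an angle of at most $\frac{\ell-1}{\beta}$, which is indeed strictly below $\pi/2$ once $\beta=\beta(\ell)$ is chosen large enough --- and that corrected bound is exactly what your exclusion step requires.
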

The following lemma extends Lemma \ref{lem:T2} to the case of $\ell$ hops.
 \begin{lemma}\label{lem:T2_khops}
 Under the assumptions above, we have $\bE[T_{\ell}] \ge c \log(1/\delta)$ for some positive constant $c$.
 \end{lemma}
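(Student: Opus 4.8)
The plan is to show that the final hop alone already takes expected time $\Omega(\log(1/\delta))$, by reducing the situation to the two-carrier analysis already carried out in Lemma~\ref{lem:T2}. First I would invoke Corollary~\ref{cor:needkrounds}: with probability at least the constant $\alpha$, the destination $D$ is not reached during the first $\ell-1$ hops, and by Lemma~\ref{lem:constantalpha} all intermediate relays -- in particular the last carrier $R_{r(\ell-1)}$ -- lie within a small angle of $S$. Since $S$ and $D$ are orthogonal, a node close to $S$ makes an angle close to $\pi/2$ with $D$, and hence has a small meeting rate with $D$. Call this constant-probability event $\mathcal{G}$ and condition on it; because $\bP[\mathcal{G}] \ge \alpha$, it suffices to prove $\bE[T_{\ell}\mid\mathcal{G}] = \Omega(\log(1/\delta))$, and then $\bE[T_{\ell}] \ge \alpha\,\bE[T_{\ell}\mid\mathcal{G}]$ finishes the bound.

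Under $\mathcal{G}$ the random variable $T_{\ell}$ is exactly the first time at which one of the two remaining carriers $S$ and $R_{r(\ell-1)}$ meets $D$, which is structurally identical to the quantity $T_2$ bounded in Lemma~\ref{lem:T2} (there the two carriers were $S$ and the first relay $R_f$). I would therefore replay that argument: partition $[0,\pi/2]$ into $\Theta(1/\delta)$ subintervals of length $\delta$, use the Chernoff estimates of Lemma~\ref{lem:T2} to control both the number of nodes and the aggregate meeting rate in each subinterval up to $(1\pm o(1))$ factors, and then apply Fact~\ref{fact:index} to the forwarding step to lower-bound the probability that the final carrier $R_{r(\ell-1)}$ lies in the $i$-th subinterval. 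A node in the $i$-th subinterval makes an angle of about $i\delta$ with $S$, hence about $\pi/2 - i\delta$ with $D$, so its meeting rate with $D$ is at most $(ki+1)\delta$ and its expected meeting time with $D$ is at least a constant times $1/((ki+1)\delta)$. Multiplying these two quantities and summing over $i$ from $1$ to $\Theta(1/\delta)$ produces a harmonic sum $c_0\sum_i 1/i = \Omega(\log(1/\delta))$, exactly as in Lemma~\ref{lem:T2}.

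The point that needs care -- and the main obstacle -- is that, unlike in Lemma~\ref{lem:T2} where the single relay $R_f$ was the first node met by $S$ with a clean distribution, here the last carrier $R_{r(\ell-1)}$ is the terminal node of a chain of $\ell-1$ dependent forwarding events, each relay being the first node closer to $D$ met by its predecessor. The joint law of $R_{r(1)},\dots,R_{r(\ell-1)}$ is therefore correlated, and conditioning on $\mathcal{G}$ could in principle concentrate the angle of $R_{r(\ell-1)}$ on a narrow range and destroy the harmonic sum. The way I would handle this is to exploit that $\ell$ is a constant: Lemma~\ref{lem:constantalpha} guarantees that with constant probability the whole chain stays in the small-angle region near $S$ without collapsing, so that conditionally on $\mathcal{G}$ the final carrier still places $\Omega(1)$ probability mass on each relevant subinterval. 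Since only $O(1)$ hops are involved, the dependencies and the conditioning cost at most constant factors, which are absorbed into $c$, and the harmonic sum -- and thus the $\Omega(\log(1/\delta))$ lower bound -- survives intact.
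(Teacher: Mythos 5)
Your proposal is correct and follows essentially the same route as the paper: condition on the constant-probability event of Lemma~\ref{lem:constantalpha} (equivalently Corollary~\ref{cor:needkrounds}), then rerun the subinterval decomposition of Lemma~\ref{lem:T2} for the last carrier to recover the harmonic sum $\sum_i 1/(ki+1)=\Omega(\log(1/\delta))$. The only differences are cosmetic: the paper makes your ``dependencies cost only constant factors since $\ell=O(1)$'' step explicit by writing the nested sum over all $\ell-1$ relay positions and factoring $\bP[E_{1(\ell-1)}]$ via the chain rule into $(c_0\delta)^{\ell-1}$, and your phrase ``$\Omega(1)$ probability mass on each relevant subinterval'' should of course read ``$\Omega(\delta)$ probability mass,'' which is what the harmonic sum actually requires.
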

 \begin{proof}
See Appendix.
 \end{proof}

 By combining Corollary~\ref{cor:needkrounds} and Lemma~\ref{lem:T2_khops}, we have proved the following:
\begin{theorem}
 $\bE[T_{FM^*}^{ib}] = \Omega(\log(1/\delta)).$
\end{theorem}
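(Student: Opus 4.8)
The plan is to lower-bound the total delivery time by its final-hop component and then show that this last hop is costly with constant probability. Writing the delivery time as a sum of nonnegative terms, $T^{ib}_{FM^*}=\sum_{i=1}^{\ell}T_i$, I get at once $\bE[T^{ib}_{FM^*}]\geq\bE[T_\ell]$, so it suffices to bound $\bE[T_\ell]$ from below. The intuition mirrors the plain-FM analysis (Lemma~\ref{lem:T2}, Theorem~\ref{FMIB}): since $S$ and $D$ are orthogonal, directly meeting $D$ is intrinsically slow, and the only way FM* could accelerate delivery is by routing $M$ through an intermediate node whose interests are markedly closer to $D$'s. The two preceding results are exactly what rules this out.

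First I would invoke Corollary~\ref{cor:needkrounds}: with probability at least a constant $\alpha>0$, the destination is \emph{not} reached during any of the first $\ell-1$ hops. Call this event $\mathcal{A}$. On $\mathcal{A}$ the last step $T_\ell$ is a genuine waiting time for one of the two message-carrying nodes ($S$ and $R_{r(\ell-1)}$) to meet $D$, rather than the degenerate value $T_\ell=0$ that occurs when $D$ has already been found. This is precisely the role of the corollary: without it, FM*'s greedy forwarding might deliver early and the $\log(1/\delta)$ bound would collapse.

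Next I would apply Lemma~\ref{lem:T2_khops}, the $\ell$-hop analogue of Lemma~\ref{lem:T2}, which (under the conditioning supplied by $\mathcal{A}$, i.e. $D$ still unfound and all intermediate hops confined near $S$ by Lemma~\ref{lem:constantalpha}) gives $\bE[T_\ell\mid\mathcal{A}]\geq c\log(1/\delta)$ for some $c>0$. Geometrically, confinement near $S$ keeps both carriers bounded away from $D$, so their meeting rate with $D$ is small, and summing the resulting per-interval waiting times over the angular partition of $[0,\pi/2]$ reproduces the harmonic sum $\Theta(\log(1/\delta))$ of the proof of Lemma~\ref{lem:T2}. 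Combining by the law of total expectation and using $T_\ell\geq0$,
\[
\bE[T^{ib}_{FM^*}]\;\geq\;\bE[T_\ell]\;\geq\;\bE[T_\ell\mid\mathcal{A}]\,\bP[\mathcal{A}]\;\geq\;c\log(1/\delta)\cdot\alpha\;=\;\Omega(\log(1/\delta)),
\]
which is the claim. (Recall we have assumed $\delta=\omega(\log n/n)$ throughout this section, so the coupling step of Lemma~\ref{lem:T2} removes the need for the $\min\{\,\cdot\,,n/\log n\}$ appearing in Theorem~\ref{FMIB}.)

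The genuinely hard part is not this final combination---a one-line conditioning argument---but the two results it rests on, above all Lemma~\ref{lem:constantalpha}. The counterintuitive point is that FM*'s greedy rule actively pushes $M$ toward $D$, yet with constant probability the first $\ell-1$ accepted hops still fail to bring it close to $D$, leaving the final carrier at an angle bounded away from $\pi/2$. The reason is that, under interest-based mobility, a carrier near $S$ overwhelmingly tends to meet nodes that are themselves near it; so although each accepted hop must be \emph{strictly} closer to $D$, the likely improvement per step is tiny, and with $\ell$ constant the cumulative angular advance stays bounded with constant probability. Quantifying this---tracking the size-biased distribution of the accepted angles induced by the meeting rates---is where the real effort goes, and it is deferred to the Appendix; once in place, the final hop reduces to the orthogonal-meeting computation of Lemma~\ref{lem:T2}.
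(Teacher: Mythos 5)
Your proposal is correct and follows essentially the same route as the paper, which proves the theorem by combining Corollary~\ref{cor:needkrounds} (with constant probability $D$ is not found among the first $\ell-1$ hops) with Lemma~\ref{lem:T2_khops} (the resulting final hop costs $\Omega(\log(1/\delta))$ in expectation). The only cosmetic difference is that you re-apply the constant-probability conditioning when invoking Lemma~\ref{lem:T2_khops}, whose stated bound already absorbs that factor; this is harmless and does not change the conclusion.
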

\subsection{Using $q$ copies and $\ell$ hops}
We now discuss how to extend the model of $\ell$ hops to the model where $q \geq 2$ copies of a message are used. We assume without loss of generality that $q=2^w$ for some natural number $w$.
%

We start with the following straightforward observation.
\begin{observation}\label{lem:help}
The number of relay nodes (excluding $S$) is at most $\ell-1$, and exactly $\ell-1$ if $D$ is not among them.
\end{observation}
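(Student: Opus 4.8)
The plan is to read the claim directly off the hop-bounded structure of the forwarding process, so the argument is really just a careful count of the nodes a copy of $M$ visits on its way to $D$. Recall that, by construction, the protocol caps the length of any message-propagation path (its number of hops) at $\ell$, and that under the FM* rule each forwarding step hands a copy to a node \emph{strictly} more similar to $D$ than the current carrier.

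First I would fix the sequence of nodes $S=V_0,V_1,V_2,\dots$ that successively receive a copy along such a path, the $i$-th hop being the transfer from $V_{i-1}$ to $V_i$. The $\ell$-hop cap yields at most $\ell$ edges on this path, hence at most $\ell+1$ nodes $V_0,\dots,V_\ell$; removing the two endpoints $S=V_0$ and $D=V_\ell$ leaves at most $\ell-1$ intermediate nodes, which are precisely the relay nodes. Next I would invoke the strict monotonicity of similarity-to-$D$ guaranteed by FM*: since $V_i$ is strictly closer to $D$ than $V_{i-1}$ for every $i$, the nodes $V_0,V_1,\dots$ are pairwise distinct, so the $\ell-1$ intermediate slots are filled by $\ell-1$ genuinely distinct relays and the count is not deflated by any node reappearing.

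For the ``exactly $\ell-1$'' half I would use the early-termination convention of the protocol (the rule that sets the remaining inter-meeting times to $0$ once $D$ has been met). If $D$ is \emph{not} among the first $\ell-1$ nodes encountered, then no hop is wasted on a premature delivery: every one of the $\ell-1$ intermediate positions is occupied by a relay, and $M$ reaches $D$ only on the final, $\ell$-th hop, giving exactly $\ell-1$ relays. Conversely, if $D$ is met at some earlier step, the path terminates sooner and strictly fewer than $\ell-1$ relays are used, which is the ``at most'' regime.

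The one point I expect to require care is reconciling this path-based count with the $q\ge 2$ copy dynamics: with binary splitting the source fans copies out, so a priori the set of carriers is a tree rather than a single chain. The clean way around this is to note that the $\ell$-hop constraint bounds the \emph{depth} of that tree, so any individual copy still passes through at most $\ell-1$ relays before reaching $D$; the counting above then applies verbatim to each root-to-$D$ branch, and the multiplicity introduced by the copies does not change the per-path relay count that the observation records.
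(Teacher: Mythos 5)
The paper supplies no proof of this observation at all---it is introduced as ``straightforward'' and used purely as bookkeeping---so the only basis for comparison is the surrounding definitions, and against those your argument is essentially the intended one. The path-based count is sound: a propagation path $S=V_0,V_1,\dots,V_\ell=D$ capped at $\ell$ hops has at most $\ell-1$ interior nodes, and the strict improvement in similarity to $D$ demanded by FM* makes the $V_i$ pairwise distinct, so no relay is double-counted; this matches the paper's own indexing $R_{r(1)},\dots,R_{r(\ell-1)}$ in the two-copy, $\ell$-hop model. Two caveats are worth making explicit, though. First, your ``exactly $\ell-1$'' direction silently assumes the relaying never stalls, i.e., that the current carrier always eventually meets a node strictly closer to $D$ than all previous carriers; if the carrier happens to be the closest of the $n$ candidates to $D$, fewer than $\ell-1$ relays are used even though $D$ is not among them. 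This is harmless for the lower bounds the observation feeds into, but your proof states the claim as unconditional. Second, and more substantively, your reconciliation with the $q$-copy dynamics reads the observation as a \emph{per-branch} statement, whereas the text that immediately follows it (the times $T_2,\dots,T_{q-1}$ counting successive relay acquisitions, and Lemma~\ref{lem:constantalpha2}, which speaks of $q-1$ intermediate nodes) works with the \emph{total} number of relays over the whole splitting tree, which is $q-1$ rather than $\ell-1$ and can exceed $\ell-1$ under binary splitting. Your per-path reading is the only one under which the literal ``$\ell-1$'' is true, and you are right to flag the tension, but you should state explicitly which of the two quantities your proof bounds, since the quantity the subsequent lemmas actually use is the total count.
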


We define the Poisson point process between two vertices $U$ and $V$, $U,V \in \{S,R_1,\ldots,R_n,D\}$ as \emph{active} at time $t$ if at time $t$ node $U$ has more than one copy of the message, $V$ does not yet have a copy, and $V$ is closer to $D$ than all vertices containing already copies of messages. Define by $T_1$ the random variable counting the time it takes for $S$ to meet the first out of $\{R_1,\ldots,R_n,D\}$. $T_i$, $i=2,\ldots,q-1$  is the random variable counting the time of the first meeting of all active Poisson point processes at time $T_1+\ldots+T_{i-1}$ from time $T_1+\ldots+T_{i-1}$ onwards ($T_i = 0$ if $D$ has been met before).  $T_q$ is the random variable counting the time of the first meeting of all Poisson processes between vertices that have one copy of the message at time $T_1+\ldots+T_{q-1}$ and $D$ ($T_q = 0$ if $D$ has been met before). 

Observe that for the FM* routing algorithm in the social-oblivious mobility model, we have $\bE[T_1]=\frac{1}{\lambda(n+1)}$, $\bE[T_i] \leq \frac{1}{\lambda(n+1)}$ for $i=1,\ldots,q-1$, and $\bE[T_q] \le \frac{1}{q \lambda}$. By the same argument as in Subsection~\ref{sub-sec-hops}, we can show that $\bE[T_{FM^*}^{so}]=\frac{1}{q \lambda}(1+o(1))$. Thus, in this model the expected message delivery time for $q > 2$ is a smaller constant. 

Now we consider FM* routing in the interest-based mobility model. Call $R_{r(1)},\ldots,R_{r(h-1)}$ the intermediate nodes in order of their appearance, i.e., $R_{r(i)}$ contains at least one copy of the message from time $T_1+\ldots+T_i$ on, for any $i=1,\ldots,h-1$. 

\begin{lemma}\label{lem:constantalpha2}
With probability at least $\alpha > 0$, $D$ is not among the first $h-1$ hops containing at least one copy of the message.
\end{lemma}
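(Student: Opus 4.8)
The plan is to mirror the argument behind Lemma~\ref{lem:constantalpha} and Corollary~\ref{cor:needkrounds}, adapting it to the branching structure created by the $q$ copies. First I would observe that, since $q=2^w$ and $\ell$ are both constants, the number of intermediate relay nodes is itself a constant: the halving rule creates at most $q-1$ distinct copy-holders besides $S$, and by Observation~\ref{lem:help} the path length is bounded as well, so $h-1=O(1)$. It therefore suffices to show that \emph{each} of the (constantly many) recruitment steps keeps every copy-holding node inside the same small-angle cone around $S$ that appears in Lemma~\ref{lem:constantalpha}, with probability bounded below by a constant; multiplying these $O(1)$ constant lower bounds then yields the claimed $\alpha>0$, and on the resulting event no copy-holder can coincide with $D$, since $S$ and $D$ are orthogonal (angle $\pi/2$) while all relays stay within a narrow cone of $S$.

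The core estimate is the per-step one. At time $T_1+\cdots+T_{i-1}$ the nodes currently holding at least one copy are, by the inductive hypothesis, all inside the small cone around $S$; hence each is near-orthogonal to $D$, and in particular is not $D$. Every active Poisson process recruits a node that is closer to $D$ than all current holders. Because under interest-based mobility the meeting rate $k\cos\alpha+\delta$ is maximized for candidates close to the (near-$S$) recruiter and is essentially $\delta$ for candidates close to $D$, the candidate nodes that both satisfy the FM* ``closer to $D$'' constraint and lie again inside the cone around $S$ are met at an overwhelmingly higher aggregate rate than any node far from $S$. I would quantify this by applying Fact~\ref{fact:index} to the competing exponentials: conditioned on the current configuration, the probability that the newly recruited relay again lands in an $O(\delta)$-cone about $S$ is bounded below by a constant, and the probability that it is $D$ (or any far-from-$S$ node) is correspondingly small, reusing the harmonic-sum / subinterval accounting of Lemma~\ref{lem:T2}.

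The main obstacle, and the point where this argument genuinely departs from the single chain of Lemma~\ref{lem:constantalpha}, is that with $q>2$ several active processes run \emph{simultaneously}, and the FM* rule couples them: a candidate must beat \emph{all} current holders in similarity to $D$, so the admissibility threshold tightens as more relays join. I would handle this by conditioning on the full configuration of copy-holders at each step and invoking Fact~\ref{fact:sum_poisson}, so that the identity of the next recruit is governed by the single order statistic of the superposition of all currently active exponentials; Fact~\ref{fact:index} then controls the conditional angular distribution of that recruit. The delicate part is to verify that the tightening ``closer to $D$'' threshold never pushes the recruited node out of the cone with more than a fixed probability across the $O(1)$ steps --- this is precisely where the constants $\alpha,\beta$ of Lemma~\ref{lem:constantalpha} must be re-derived with the branching taken into account. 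Once this per-step constant is secured, the product over the constantly many steps gives a positive $\alpha$, and the orthogonality of $S$ and $D$ finishes the proof.
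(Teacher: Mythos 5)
Your proposal follows essentially the same route as the paper's proof: iterate the subinterval/cone argument of Lemma~\ref{lem:constantalpha} over the constantly many recruitment steps, using the superposition of the currently active Poisson processes together with Fact~\ref{fact:index} to show that, no matter which copy-holder does the recruiting, the next relay again falls into the next block of subintervals with probability bounded below by a constant, and then multiplying these constants and invoking the orthogonality of $S$ and $D$. The only slip is describing the target region as an ``$O(\delta)$-cone'' about $S$; it is the union of $\Theta(1/(\beta\delta))$ subintervals each of width $\delta$, i.e., a cone of \emph{constant} angular width, which is precisely why the per-step probability is a constant rather than $O(\delta)$.
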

\begin{proof}
See Appendix.
\end{proof}

The previous lemma states that, with probability at least $\alpha$, all of the $R_{r(1)},\ldots,R_{r(q-1)}$ nodes make an angle of at most $\frac{q-1}{\beta \delta}$ with $S$. 

We are now ready to state the main result of this section.
\begin{theorem}\label{ThmMultiCopies}
Assume $S$ has $q$ copies of $M$ and we can make up to $\ell = \log_2 q$ hops, then 
$$\bE[T_{FM^*}^{ib}]=\Omega(\log(1/\delta)).$$
\end{theorem}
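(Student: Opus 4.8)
The plan is to mirror the argument that established $\bE[T_{FM^*}^{ib}] = \Omega(\log(1/\delta))$ for the two-copy, $\ell$-hop case (which combined Corollary~\ref{cor:needkrounds} with Lemma~\ref{lem:T2_khops}), replacing those two ingredients by their $q$-copy analogues: Lemma~\ref{lem:constantalpha2} in place of Corollary~\ref{cor:needkrounds}, and a final-round lower bound in the spirit of Lemma~\ref{lem:T2_khops} that now accounts for $q$ simultaneous single-copy carriers. Recall that with $\ell = \log_2 q$ the binary-splitting rule distributes all $q$ copies into exactly $q$ single-copy carriers (a full binary tree of depth $\ell$), so this is precisely the regime of maximal spreading; proving the bound here is enough, since fewer hops leave carriers holding several copies and, by the observation made at the start of this section, cannot speed up delivery.

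First I would condition on the event $\mathcal{G}$ furnished by Lemma~\ref{lem:constantalpha2}: with probability at least a constant $\alpha > 0$, $D$ is not among the $q-1$ relay nodes $R_{r(1)}, \ldots, R_{r(q-1)}$, and all of them remain within a controlled angle of $S$ (equivalently, bounded away from $D$). Since $\bP[\mathcal{G}] \ge \alpha$, it suffices to show $\bE[T_q \mid \mathcal{G}] = \Omega(\log(1/\delta))$, where $T_q$ is the final round in which the $q$ single-copy carriers $U_1 = S, U_2, \ldots, U_q$ race to meet $D$ (delivery before this round is excluded on $\mathcal{G}$). By Fact~\ref{fact:sum_poisson} the time for the first of them to reach $D$ is exponential with rate $\Lambda = \sum_{j=1}^q (k\cos\angle(U_j, D) + \delta)$, so what I really need is a distributional upper bound on $\Lambda$.

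The heart of the argument is a lower bound on $\bE[1/\Lambda]$ via the $\delta$-subinterval partition of $[0,\pi/2]$ used in Lemma~\ref{lem:T2}. Since $D = e_2$ we have $\cos\angle(U_j, D) = (U_j)_2 \le \sin\angle(U_j, S)$, so the carrier closest to $D$ governs $\Lambda$ up to the constant factor $q$: writing $\theta^* = \max_j \angle(U_j, S)$ for the angle to $S$ of this closest carrier, $\Lambda \le q\,(k\sin\theta^* + \delta)$, whence $\bE[T_q \mid \mathcal{G}] \ge \tfrac{1}{q}\,\bE[(k\sin\theta^* + \delta)^{-1} \mid \mathcal{G}]$. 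If, conditioned on $\mathcal{G}$, the carriers stay correlated and clustered near $S$ so that $\theta^*$ places mass of order $\delta$ on each of the $\Theta(1/\delta)$ subintervals near $0$, then a carrier at angle $\approx i\delta$ contributes $(k\sin(i\delta)+\delta)^{-1}\approx \tfrac{1}{(ki+1)\delta}$ with weight $\approx \delta$, and summing $\tfrac{1}{ki+1}$ over the first $\Theta(1/\delta)$ intervals, exactly as in the proof of Lemma~\ref{lem:T2}, yields $\Theta(\log(1/\delta))$; the prefactor $1/q$ is harmless since $q$ is fixed. Combining with $\bP[\mathcal{G}] \ge \alpha$ gives $\bE[T_{FM^*}^{ib}] = \Omega(\log(1/\delta))$.

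The main obstacle is precisely this last step. A priori, having $q$ carriers racing to $D$ could collapse the delivery time to $\Theta(1)$: if the carriers were independent draws, the closest of $q$ of them would typically sit at a constant angle from $D$, making $\theta^*$ concentrate away from $0$ (its density near $0$ would scale like $\theta^{q-1}$) and forcing $\Lambda = \Theta(1)$. The bound survives only because (i) $q = 2^\ell$ is a fixed constant, so parallelism buys at most a constant factor, and (ii) FM*'s rule of forwarding only to strictly closer nodes, together with the slow angular advancement quantified in Lemma~\ref{lem:constantalpha2}, keeps all $q$ carriers correlated and clustered near $S$ with constant probability, restoring the single-sample, order-$\delta$-per-subinterval behaviour of $\theta^*$ that produces the harmonic sum. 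Making this confinement \emph{quantitatively} compatible with the harmonic-sum computation — rather than merely bounding $\theta^*$ by a constant, which would only yield $\Omega(1)$ — is the delicate point, and is where I would concentrate the effort.
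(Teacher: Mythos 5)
Your proposal is correct and follows essentially the same route as the paper's proof: condition on the confinement event of Lemma~\ref{lem:constantalpha2}, bound the superposed rate of the $q$ carrier--$D$ processes by $q$ times the rate of the carrier closest to $D$ (the paper phrases this via the FM* ordering making $R_{r(q-1)}$ the closest, you via $\theta^*=\max_j\angle(U_j,S)$ --- the same bound), and then run the $\delta$-subinterval harmonic sum of Lemma~\ref{lem:T2_khops} to get $\Omega(\log(1/\delta))/q$, which suffices since $q$ is constant. The ``delicate point'' you flag at the end is resolved in the paper exactly as you anticipate, by summing over the subinterval containing the last relay as in Lemma~\ref{lem:T2_khops}.
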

\begin{proof}
See Appendix.
 \end{proof}

\subsection{Unknown destination}
  A major limitation of IB routing is that the sender is assumed to know the interest profile of the destination, i.e., the coordinates $D[a_1,a_2,\ldots ,a_m]$ of $D$ in the interest space.  We now relax this assumption assuming  that $S$ knows the identity of node $D$ (so delivery of $M$ to $D$ is possible), but not its interest profile, and we  show that a modified version of the $IB(\gamma)$ routing that uses more than one copy of the message  also provides asymptotically the same upper bound as the original version of IB routing. 

The idea is that the routing protocol chooses $m-1$ relay nodes (i.e., the number of message copies equals the number of dimensions in the interest space) with the characteristic that each one the $m-1$  relay nodes will be ``almost orthogonal" to the others and to  $S$, and $S$ will pass a copy  to each one of them, and keep one. 
Therefore, when  $S$ decides whether or not to forward a copy of $M$ to a possible  $R_i$, $S$ has only information of $R_i[a_1,a_2,\ldots ,a_m]$. 
Let $\hat{R}_j$ denote the $j$-th relay chosen node, $j=1,2,\ldots, m-1$.  
We consider the following   routing algorithm \texttt{Mod-}$IB(\gamma)$ to choose relay nodes:
If $S$ meets a node 
with coordinates $R_i[r_1,r_2, \ldots, r_m]$, the node becomes the $j$-th relay node $\hat{R}_j$, $j=1,2, \ldots, q-1$, if the following  conditions are met:
\begin{itemize}
\item[--] $0.05 \leq R_i[1] \leq 0.1$;  
\item[--] $\exists k,\, 2 \leq k \leq m$ s. t. $0.8 \leq R_i[k] \leq 0.85$;  
\item[--] $\forall s,\, 1,\leq s\leq j-1$, $\hat{R}_s[k] < 0.8$.
\end{itemize}

\begin{theorem}\label{thm:undest}
For a constant $c>0$ and  $\gamma=\frac{0.29}{m-1}$ we have  $\bE[T_{\mathtt{Mod-}IB(\gamma)}^{ib}]\leq m \gamma/c$.
\end{theorem}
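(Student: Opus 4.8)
The plan is to reuse the two-phase decomposition of Theorem~\ref{IBIB}, writing the delivery time as $T_1+T_2$, where $T_1$ is the time for $S$ to select all $q-1=m-1$ relays $\hat R_1,\dots,\hat R_{m-1}$ and $T_2$ is the time until the first of the $m$ carriers (namely $S$ together with the selected relays) meets $D$. The conceptual heart of the argument is the following observation, which replaces the knowledge of $D$'s profile used in Theorem~\ref{IBIB}: a nonnegative unit vector cannot have two coordinates both at least $0.8$ (since $0.8^2+0.8^2>1$), so each eligible node is dominant in \emph{exactly one} axis $k\in\{2,\dots,m\}$, and the third selection condition forbids re-using an axis; hence the $m-1$ relays are forced to be dominant in \emph{distinct} axes $2,\dots,m$. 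As we are in the orthogonal worst-case scenario, $D=(0,1,0,\dots,0)$ has all of its mass on these axes, so covering every one of them guarantees that \emph{some} relay — the one dominant in axis $2$ — is close to $D$, without $S$ ever having to know which axis that is.

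First I would establish the analogue of Lemma~\ref{lem:N_f}. For each fixed $k\in\{2,\dots,m\}$ the set of nonnegative unit vectors with $0.05\le R_i[1]\le 0.1$ and $0.8\le R_i[k]\le 0.85$ is geometrically feasible ($0.1^2+0.85^2<1$ leaves room for the remaining coordinates) and has positive measure on the positive orthant of $\mathcal S$; by the symmetry of the sampling distribution across the coordinates $2,\dots,m$ this measure equals the same constant $p>0$ for every $k$. Thus the expected number of nodes eligible for axis $k$ is $pn=\Theta(n)$, and since node positions are independent, a Chernoff bound gives that, with probability $1-e^{-\Theta(n)}$, every axis $k\in\{2,\dots,m\}$ \emph{simultaneously} has at least $(1-\nu)pn$ eligible candidates. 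Call this event $\cE$ and condition on it.

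Next I would bound the two phases under $\cE$. For $T_1$, each eligible node has $\cos\angle(S,R_i)=R_i[1]\ge 0.05$, so by Fact~\ref{fact:sum_poisson} the rate at which $S$ meets an eligible node dominant in a \emph{not-yet-covered} axis is $\Theta(n)$ as long as at least one axis remains uncovered — here we use $m=\Theta(1)$, so even when a single axis is left there are still $\Theta(n)$ candidates for it. Accumulating all $m-1=O(1)$ relays therefore costs expected time $O(1/n)$; on the complementary event $\hcE$ the contribution is negligible exactly as in Lemma~\ref{lem:L1-T1}, giving $\bE[T_1]=O(1/n)$. For $T_2$, the relay dominant in axis $2$ satisfies $\cos\angle(\hat R,D)=\hat R[2]\ge 0.8$, so its meeting rate with $D$ is at least $0.8k+\delta\ge 0.8k=\Theta(1)$; since $T_2$ is at most the first-meeting time of this single relay with $D$, memorylessness of the exponential gives $\bE[T_2]\le \tfrac{1}{0.8k}=\Theta(1)$, in the spirit of Lemma~\ref{lem:L2-T2}.

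Combining the two phases yields $\bE[T^{ib}_{\mathtt{Mod-}IB(\gamma)}]=\bE[T_1]+\bE[T_2]=O(1)$, and since $m\gamma=\tfrac{0.29\,m}{m-1}=\Theta(1)$ is a positive constant, the $O(1)$ bound rewrites as $m\gamma/c$ for a suitable constant $c>0$, which is the claim. The main obstacle I anticipate is not any individual estimate but the careful verification that the axis-covering mechanism is well-defined and complete: that each eligible node is dominant in exactly one axis, that $\cE$ indeed supplies candidates for \emph{all} $m-1$ axes at once, and that the coupon-collector-style accumulation of relays keeps the per-step meeting rate at $\Theta(n)$ all the way down to the last uncovered axis. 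Once these structural facts are in place, everything else reduces to the estimates already developed for Theorem~\ref{IBIB}.
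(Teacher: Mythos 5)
Your proposal is correct and follows essentially the same route as the paper's proof: the same $T_1+T_2$ decomposition, the same Chernoff/union-bound argument that every axis has $\Theta(n)$ eligible candidates (each meeting $S$ at rate $\Theta(1)$ since $R_i[1]\ge 0.05$) so that $\bE[T_1]=O(1/n)$, and the same axis-coverage argument guaranteeing one relay is interest-close to $D$ so that $\bE[T_2]=O(1)$. The only immaterial difference is that you specialize to $D=(0,1,0,\dots,0)$ to get $\cos\angle(\hat R,D)\ge 0.8$, whereas the paper argues for any $D$ in the positive orthant via a coordinate of size at least $1/\sqrt m$, yielding $\cos\angle(\hat R_j,D)\ge 0.8/\sqrt m>\gamma/(m-1)$.
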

\begin{proof}
See Appendix.
\end{proof}
\section{Uniform distribution of the destination $D$}\label{Unif}
So far, we have  considered $S$ and $D$ to have orthogonal coordinates in the interest space.
In this section, we extend the analysis to the case where the source keeps its coordinates $S[1,0,\ldots, 0]$, but we choose $\angle (S,D)$ uniformly at random in $[0,\pi/2]$. We show that under this average-case assumptions, the original FM routing algorithm takes constant time also with interest-based mobility, i.e., it has the same asymptotical performance as IB routing.

\begin{theorem}\label{thm:averagecase}
Assume the angle between $S$ and $D$ is chosen uniformly at random in $[0,\pi/2]$. Then, $\bE[T^{ib}_{FM}]=O(1).$
\end{theorem}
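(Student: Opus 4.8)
The plan is to condition on the (now random) angle $\theta=\angle(S,D)$, which is uniform in $[0,\pi/2]$ with density $2/\pi$, and to write $\bE[T^{ib}_{FM}]=\bE[T_1]+\bE[T_2]$. The term $\bE[T_1]$ is harmless: with probability $1-e^{-\Theta(n)}$ a constant fraction of the relays lie within angle $\pi/4$ of $S$, so the rate at which $S$ meets its first node is $\Theta(n)$ and $\bE[T_1]=o(1)$. The whole difficulty is $\bE[T_2]$, and here one must resist the temptation to bound it by the contribution of $S$ alone: since $S$ keeps a copy, $\bE[T_2\mid\theta]\le 1/\lambda_{SD}=1/(k\cos\theta+\delta)$, but $\frac{2}{\pi}\int_0^{\pi/2}\frac{d\theta}{k\cos\theta+\delta}=\Theta(\log(1/\delta))$, which is \emph{unbounded}. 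Thus the relay $R_j$ must be exploited: the conceptual point of the proof is that $R_j$ turns the non-integrable singularity $1/\cos\theta$ near $\theta=\pi/2$ into an integrable one.

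Conditioning on $\theta$ and on the first relay $R_j$ met by $S$, the residual meeting processes of $S$ and of $R_j$ with $D$ are independent exponentials, so by Fact~\ref{fact:sum_poisson} the variable $T_2$ is exponential with rate $\lambda_{SD}+\lambda_{R_jD}$ and
\[
\bE[T_2\mid\theta]=\bE_{R_j}\!\left[\frac{1}{\lambda_{SD}+\lambda_{R_jD}}\right].
\]
For $\theta\le \pi/2-\epsilon_0$ (any fixed $\epsilon_0>0$) we have $\lambda_{SD}\ge k\sin\epsilon_0$, so $\bE[T_2\mid\theta]=O(1)$ and this range contributes $O(1)$ after integration. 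It remains to handle $\theta$ near $\pi/2$, where $\lambda_{SD}$ is tiny and the relay must carry the load.

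The crux is an anti-concentration estimate for $\lambda_{R_jD}$, i.e.\ for $W:=\cos\angle(R_j,D)\ge 0$. Write $S=e_1$, $D=\cos\theta\,e_1+\sin\theta\,u_0$ with $u_0\perp e_1$ a unit vector in the positive orthant, and $R_j=\cos\alpha\,e_1+\sin\alpha\,v$ with $\alpha=\angle(S,R_j)$ and $v\perp e_1$ a unit direction in the positive orthant. Since all coordinates are nonnegative, both summands below are nonnegative and, for $\theta$ near $\pi/2$,
\[
W=\cos\alpha\cos\theta+\sin\alpha\sin\theta\,(v\cdot u_0)\ \ge\ \tfrac12\sin\alpha\,(v\cdot u_0).
\]
Now $R_j$ is the first node met by $S$, so its law is that of a node \emph{size-biased} by $\lambda_{S,R_j}=k\cos\alpha+\delta$; as this factor is bounded, $\alpha$ still has a bounded density on $[0,\pi/2]$ and, given $\alpha$, the direction $v$ remains uniform on the positive-orthant part of the sphere in $e_1^{\perp}$ (the selection depends only on the rates, hence only on the $\alpha$'s). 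Therefore $\bP[\sin\alpha\le\sqrt w]=O(\sqrt w)$ and, for fixed $u_0$, $\bP[v\cdot u_0\le\sqrt w]=O(\sqrt w)$ uniformly in $u_0$, so a union bound yields the key lemma
\[
\bP[\,W\le w\,]\ \le\ C\sqrt w,\qquad\text{uniformly in }\theta .
\]
Feeding this tail bound into $\bE_{R_j}[(\lambda_{SD}+\lambda_{R_jD})^{-1}]=\int_0^{1/\lambda_{SD}}\bP[\,kW<1/t-\lambda_{SD}\,]\,dt$ and using $\lambda_{SD}\ge k\cos\theta$ gives $\bE[T_2\mid\theta]=O\big((k\cos\theta)^{-1/2}\big)$.

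Finally, combining the two ranges,
\[
\bE[T_2]=\frac{2}{\pi}\int_0^{\pi/2}\bE[T_2\mid\theta]\,d\theta
=O(1)+O\!\left(\int_{\pi/2-\epsilon_0}^{\pi/2}\frac{d\theta}{\sqrt{\cos\theta}}\right)=O(1),
\]
because the singularity $(\cos\theta)^{-1/2}$ at $\theta=\pi/2$ is integrable (unlike $(\cos\theta)^{-1}$). Together with $\bE[T_1]=o(1)$ this gives $\bE[T^{ib}_{FM}]=O(1)$. I expect the main obstacle to be the uniform anti-concentration lemma $\bP[W\le w]\le C\sqrt w$: one must control the size-biasing introduced by the first-meeting selection of $R_j$, deal with the blow-up of the base density near $S$ and near the boundary of the orthant, and check uniformity over the random direction $u_0$ of $D$. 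Any polynomial bound $\bP[W\le w]=O(w^{\rho})$ with $\rho>0$ would in fact suffice, since it leads to the integrable singularity $(\cos\theta)^{\rho-1}$ in the final integral.
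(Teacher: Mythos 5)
Your proposal is correct and proves the right statement, but it takes a genuinely different route from the paper. The paper's proof is carried out only for $m=2$: it discretizes both $\angle(S,D)$ and $\angle(S,R)$ into intervals of length $\delta$, uses the planar relation $\angle(R,D)\le\angle(S,D)-\angle(S,R)+\delta$ to bound the conditional expectation of $T_2$ by $1/(k\cos((j-i+1)\delta))$, and then evaluates the resulting double sum $c\,\delta\sum_u\sum_t\frac{1}{t+u}=O(1)$ explicitly. You instead condition only on $\theta=\angle(S,D)$ and replace the deterministic angle subtraction by a probabilistic, dimension-free anti-concentration bound $\bP[\cos\angle(R_j,D)\le w]=O(\sqrt{w})$, obtained from $W\ge\tfrac12\sin\alpha\,(v\cdot u_0)$ together with the observations that the size-biasing of the first-met relay affects only the (bounded) density of $\alpha$ and leaves the direction $v$ uniform; this yields $\bE[T_2\mid\theta]=O\bigl((\cos\theta)^{-1/2}\bigr)$, which is integrable at $\pi/2$. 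The two arguments share the same mechanism -- the relay converts the non-integrable $1/\cos\theta$ singularity of the direct $S$--$D$ contact into an integrable one -- but yours buys generality (it works verbatim for every fixed $m$, whereas the paper's angle identity is specific to the circle and its computation is only done for $m=2$), while the paper's 2D computation gives the slightly sharper conditional behavior $O(\log(1/\cos\theta))$. Your explicit treatment of the size-biased law of $R_j$ also makes rigorous a step the paper passes over quickly (its claim that the chosen relay lands in a given $\delta$-interval with probability $O(\delta)$), and, as you note, any polynomial tail $\bP[W\le w]=O(w^{\rho})$ with $\rho>0$ would suffice, so the argument is robust to the constants in the anti-concentration lemma.
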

\begin{proof} 
See Appendix.
\end{proof}
Note  that a routing algorithm without intermediate hops, call it $FM^0$, according to which $S$ can only directly deliver the message to $D$, still needs more than constant time in expectation.
\begin{lemma}\label{lem:0hops}
Under the above assumptions, we have $\bE[T_{FM^0}]=\Omega(\log(1/\delta)).$
\end{lemma}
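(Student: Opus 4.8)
The plan is to observe that in $FM^0$ there are no relay nodes at all, so the message is delivered precisely when $S$ meets $D$ directly. Hence $T_{FM^0}$ is exactly the first-meeting time between $S$ and $D$, which under interest-based mobility is an exponential random variable of rate $\lambda_{SD}=k\cos\alpha+\delta$, where $\alpha=\angle(S,D)$. By assumption $\alpha$ is uniform in $[0,\pi/2]$, so I would first condition on $\alpha$, using $\bE[T_{FM^0}\mid\alpha]=\frac{1}{k\cos\alpha+\delta}$, and then integrate against the density $2/\pi$ to obtain
\[
\bE[T_{FM^0}]=\frac{2}{\pi}\int_0^{\pi/2}\frac{d\alpha}{k\cos\alpha+\delta}.
\]

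Next I would isolate the source of the logarithmic growth. The integrand is bounded near $\alpha=0$ (there $k\cos\alpha+\delta\ge k$), so all the growth comes from the neighborhood of $\alpha=\pi/2$, i.e.\ from source--destination pairs whose interests are nearly orthogonal and whose meeting rate is therefore dominated by the small term $\delta$. I would make this precise with the substitution $\beta=\pi/2-\alpha$, turning the integral into $\frac{2}{\pi}\int_0^{\pi/2}\frac{d\beta}{k\sin\beta+\delta}$, and then use the elementary inequality $\sin\beta\le\beta$ to lower-bound it by $\frac{2}{\pi}\int_0^{\pi/2}\frac{d\beta}{k\beta+\delta}$.

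Finally I would evaluate this last integral in closed form: its antiderivative is $\frac1k\log(k\beta+\delta)$, so it equals $\frac{2}{\pi k}\log\frac{k\pi/2+\delta}{\delta}$. Recalling from (\ref{expectation}) that $k=\frac{\pi}{2}(\lambda-\delta)=\Theta(1)$ as $\delta\to0$, this is $\frac{2}{\pi k}\bigl(\log(1/\delta)+O(1)\bigr)=\Omega(\log(1/\delta))$, which is the claim. There is no genuine obstacle here; the only points to watch are that the $\cos$ near $\pi/2$ must be handled through the $\sin\beta\le\beta$ bound so that the divergence is captured with the correct direction of inequality for a lower bound, and that $k$ is treated as a positive constant rather than absorbed into the asymptotics.
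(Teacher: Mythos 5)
Your proposal is correct and follows essentially the same route as the paper: both reduce $\bE[T_{FM^0}]$ to the integral $\frac{2}{\pi}\int_0^{\pi/2}\frac{d\alpha}{k\cos\alpha+\delta}$ and extract the $\log(1/\delta)$ divergence from the near-orthogonal angles. Your evaluation via the substitution $\beta=\pi/2-\alpha$ and the bound $\sin\beta\le\beta$ is in fact cleaner than the paper's, which splits off the range $\alpha>\pi/2-\delta$, integrates $1/\cos\alpha$ exactly, and Taylor-expands the resulting expression around $\pi/4$.
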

\begin{proof}
See Appendix.
 \end{proof}
 
Comparing Theorem~\ref{thm:averagecase} with Theorem~\ref{IBIB}, and observing that average case performance of IB routing can be no worse than its performance in the worst case, we can conclude that FM and IB routing yield the same asymptotic performance in the average case.

 \section{Conclusion}\label{conclusions}

 We have formally analyzed  and experimentally validated the  delivery time under mobility and forwarding scenarios accounting for social relationships between network nodes. The main contribution of this paper is proving that, under fair conditions for what concerns storage resources, social-aware forwarding is asymptotically superior to social-oblivious forwarding in presence of interest-based mobility: its performance is never below, while it is asymptotically superior under some circumstances, namely, orthogonal interests between sender and destination. 

As a byproduct, our analysis provides interesting insights on the design of social-aware forwarding strategies; for instance, our results indicate that when the interest profile of the destination is not known to the source node, a good strategy is trying to deliver a copy of the message to forwarding nodes with ``almost orthogonal" interests, in order to increase the chances that at least one of them is near to the destination in the interest space and, hence, likely to meet the destination soon according to the interest-based mobility model.
 
We believe several avenues for further research are disclosed by our initial results, such as considering scenarios in which individual interests evolve in a short time scale, or scenarios in which forwarding of messages is probabilistic instead of deterministic.

\section{Acknowledgements}

The three first authors were partially supported by the EU through project FRONTS. The work of P. Santi was partially supported by MIUR, program PRIN, Project COGENT. The initial ideas underlying this work were developed when some of the authors were visiting the Centre de Recerca Matematica, Barcelona, Spain.

\newpage

\section{Appendix}

{\bf Proof of Lemma \ref{lem:constantalpha}.} As in Section \ref{IBanalysis}, we partition the interval $[0,\pi/2]$ into subintervals of length $\delta$ and we extend  Lemma~\ref{lem:T2}  by showing that the probability that node $R_{r(1)}$ is chosen from the $i$-th subinterval is at least
 $$\frac{2\delta n (1-\eta)(1-\epsilon)(k\cos (i \delta)+\delta)/\pi}{(1+\nu)n(2k/\pi+\delta)}.$$ 
 For $i \leq \frac{1}{100\delta}$, this probability is at least $\frac{\delta}{c}$ for some constant $c > 0$. Choose $\beta=\beta(\ell)$ to be a sufficiently large constant. Thus, the probability that $R_{r(1)}$ is in the first $\frac{1}{\beta \delta}$ subintervals, is at least $\eta_1$ for some $\eta_1 > 0$. By conditioning under this event, by a similar argument as in Lemma~\ref{lem:T2}, we can prove using Theorem A.1.15 of~\cite{AlonSpencer}, that, with high probability, the intensity of the Poisson point processes between $S$ (or the chosen node $R_{r(1)}$ in the first $\frac{1}{\beta \delta}$ subintervals) and all nodes whose angle w.r.t. the destination is to the right of these subintervals is at least $\Omega(n)$. 
 
 We now recall that all vertices closer to $D$ than  $R_{r(1)}$ are possible next hops. Using this observation, we now iterate the previous reasoning: with probability at least $\eta_2$ the second node $R_{r(2)}$ is among the first $\frac{2}{\beta \delta}$ subintervals (and not among the first $\frac{1}{\beta \delta}$ subintervals), and in general with probability $\eta_i$ the node $R_{r(i)}$ is among the first $\frac{i}{\beta \delta}$ subintervals, for any $i=1,\ldots,\ell-1$. Therefore, with probability at least $\alpha := \prod_{i=1}^{\ell-1} \eta_i$, the first $\ell-1$ intermediate nodes form angles of at most $\frac{\ell-1}{\beta \delta}$ with $S$, thus proving the lemma. 
 \bigskip
 
{\bf Proof of Lemma \ref{lem:T2_khops}.} To prove the lemma, we first use  Lemma~\ref{lem:constantalpha}, that implies that nodes $R_1,\ldots,R_{\ell-1}$ all make an angle of at most $\frac{\ell-1}{\beta \delta}$ with probability at least $\alpha$. 
 Thus, conditioning under this event, we apply a similar argument as in Lemma~\ref{lem:T2}: denoting by $E_j$ the event that node $R_j$ is in subinterval $i_j$, and denoting by $E_{1(\ell-1)}$ the event that $E_1 \wedge \ldots \wedge E_{\ell-1}$, we have that there exist constants $c_0,c_1 > 0$ such that
 \begin{eqnarray}
 \bE[T_{\ell}] & \geq &
  \sum_{i_1=1}^{\frac{\ell-1}{\beta \delta}}\ldots \sum_{i_{\ell-1}=i_{\ell-2}}^{\frac{\ell-1}{\beta \delta}} \bE[T_{\ell}|E_{1(\ell-1)}]  \bP[E_{1(\ell-1)}] \nonumber \\
& =&\sum_{i_1=1}^{\frac{\ell-1}{\beta \delta}}\ldots \sum_{i_{\ell-1}=i_{\ell-2}}^{\frac{\ell-1}{\beta \delta}} \bE[T_{\ell}|E_{1(\ell-1)}] \nonumber \\
  && \bP[E_1]\bP[E_2 | E_1] \ldots \bP[E_{\ell-1} | E_1 \wedge \ldots \wedge E_{\ell-2}] \nonumber \\
 &\geq &\sum_{i_1=1}^{\frac{\ell-1}{\beta \delta}}\ldots \sum_{i_{\ell-1}=i_{\ell-2}}^{\frac{\ell-1}{\beta \delta}} \frac{1}{(ki_{\ell-1}+1)\delta}e^{-1} (c_0 \delta)^{\ell-1} \nonumber \\
 &\geq & c_1 e^{-1} \sum_{i_{\ell-1}=i_{\ell-2}}^{\frac{\ell-1}{\beta \delta}} \frac{1}{(ki_{\ell-1}+1)}, \nonumber 
 \end{eqnarray}
 and therefore $\bE[T_{\ell}] \ge c \log(1/\delta))$ for some positive constant $c$.
 
 \bigskip
 
 {\bf Proof of Lemma \ref{lem:constantalpha2}.} We first observe that Lemma~\ref{lem:constantalpha} also applies in the case when we consider Poisson point processes between any node out of $\{S,R_{r(1)},\ldots,R_{r(i)}\}$ (chosen from the first $\frac{i}{\beta \delta}$ subintervals) and a node to the right of $R_{r(i)}$: no matter which node is chosen out of $\{S,R_{r(1)},\ldots,R_{r(i)}\}$, the probability of choosing one node from the subinterval following $R_{r(i)}$ is still $\Theta(1/\delta)$, since the total intensity of all Poisson point processes between $R_{r(i)}$ and the vertices to the right of $R_{r(i)}$ is still $\Theta(n)$. Thus, by considering the $\frac{1}{\beta \delta}$ subintervals following $R_{r(i)}$, we can show  that, with constant probability, the next node  $R_{r(i+1)}$ belongs to these subintervals. 

By multiplying all constants of all $q-1$ steps, we can show  that, with probability at least $\alpha$, all $q-1$ intermediate nodes  form an angle of at most $\frac{q-1}{\beta \delta}$ with any node out of $\{S,R_{r(1)},\ldots,R_{r(i)}\}$, and thus with at least that probability $D$ is not among these $q-1$ nodes.
 
\bigskip

{\bf Proof of Theorem \ref{ThmMultiCopies}.} To prove the result, we  have to show that $\bE[T_q]=\Omega(1/\delta).$
Observe that if the Poisson point process between $R_{r(q-1)}$ and $D$ has at most certain intensity $\mu$, all other Poisson point processes between $\{R_{r(i)}\}_{i=1}^{q-2}\cup \{S\}$ also have intensity at most $\mu$. Moreover, these Poisson point processes are independent, and their superposition gives rise, by Fact~\ref{fact:sum_poisson}, to a new Poisson point process with intensity at most $q \mu$. Thus, using similar arguments as in the proof of Lemma~\ref{lem:T2_khops}, we can split the value of $\bE[T_q]$ according to the subintervals of length $\delta$ to which node $R_{r(i)}$ belongs  (assuming that all of them are among the first $\frac{q-1}{\beta \delta}$ subintervals),  obtaining 
\begin{eqnarray}
 \bE[T_{q}] & \geq & c_1 e^{-1} \sum_{i_{q-1}=i_{q-2}}^{\frac{q-1}{\beta \delta}} \frac{1}{q(ki_{q-1}+1)}, \nonumber
 \end{eqnarray}
 and thus $\bE[T_q] \geq \frac{c_2 \log(1/\delta)}{q}$. Since $q$ is assumed to be constant, $\bE[T_q]=\Omega(1/\delta).$
 
 \bigskip
 
 {\bf Proof of Theorem \ref{thm:undest}.} Assume that $\hat{R}_j$ is only accepted as $j$-th relay node if the value in the $j$-th coordinate is between $0.8$ and $0.85$.
First we will show that $\bE[T_1]=O(1/n)$. Observe that for any node $R$ (except for $D$), 
\[
\PR{0.05 \leq \cos \angle (S,R) \leq 0.1}  \geq 0.03.\]
Conditioned under making such an angle, the sum of the squares of all other coordinates of $R$ is at least $0.99$, and once the angle is chosen,  the position on the sphere  is selected uniformly at random from all remaining positions, so the position of $v$ is chosen from the surface of an $(m-1)$-dimensional sphere of squared radius $\geq 0.99$. Fix a coordinate $j \geq 2$ in which we would like to have $0.8 \leq R[j] \leq 0.85$. 
The intersection of an $m$-dimensional sphere of squared radius at least $0.99$ centered at the origin, with the region bounded by the two parallel hyperplanes whose values in dimension $j$ are $0.8$ and $0.85$, respectively, has a surface area which is bigger than the one of an $(m-1)$-dimensional sphere of squared radius at least $0.51$. \remove{(this value is attained, if the value in dimension $j$ is exactly $0.85$) multiplied by $0.05$. }Intersecting this area with a hyperplane having some fixed value in the first coordinate between $0.05$ and $0.1$ \remove{(accounting for the initially chosen angle $\angle(S,D)$) }yields a surface area of at least the surface area of an $(m-2)$-dimensional sphere of squared radius $0.5$ times \remove{the length in dimension $j$, which is} $0.05$. Thus, denoting by $S_m(r)$ the surface area of an $m$-dimensional sphere of radius $r$, and denoting by $\cV$ the event that vertex $R$ satisfies the conditions for being selected as relay node $\hat{R}_j$, we obtain 
$$\PR{\cV}  \geq 0.03 \frac{0.05 S_{m-2}(\sqrt{0.5})}{S_{m-1}(1)},$$
where $c>0$ is obtained 
from the  the fact that 
$S_m(r)=\frac{2\pi^{m/2}r^{m-1}}{\Gamma(m/2)}$. Thus, we have an  expected number of potential vertices satisfying the condition to  be selected as  $\hat{R}_j$. Since all vertices are independent, we have that with probability $\geq 1-e^{-\Theta(n)}$ this number is linear. Taking a union bound,
we get that this also holds for all $m$ dimensions with the same probability. Since the intensity between any vertex eligible as $\hat{R}_j$ and $S$ is constant, conditioning under the event $\cal{E}$ of having a linear number of possible relay nodes, $\bE[T_1 |\cal{E}]$ $= O(1/n)$.  Since $\PR{\cal{\bar{E}}} = e^{-\Theta(n)}$,  the dominating contribution comes from $\bE[T_1 | \cal{E}]$, so $\bE[T_1]=O(1/n).$\\
To finish the proof, we show that there exists  $c > 0$ such that
$\bE[T_2] \leq m \gamma/c$: observe that since $D$ is a vector in the positive orthant  of the $m$-dimensional sphere, in at least one dimension its coordinate has to be at least $\frac{1}{\sqrt{m}}$. As there exists some  $\hat{R}_j$ whose value in this coordinate is  $\geq 0.8$, $\cos \angle(\hat{R}_j,D) \geq \frac{0.8}{\sqrt{m}}$. For any $m \geq 2$, $\frac{0.8}{\sqrt{m}} > \frac{\gamma}{m-1}$, and thus, the same analysis as above gives the upper bound of $m \gamma/c$.

\bigskip

{\bf Proof of Theorem \ref{thm:averagecase}.} We restrict ourselves to the $2$-dimensional case. Denote by $R$ the relay node chosen.
Since the first node met by $S$ is selected as relay node, $\bE[T_1]=O(1/n)$. We will give an upper bound on $\bE[T_2]$  following the same ideas as before: we  split the angles between $S$ and $D$ as well as the positions of $R$, into intervals of length $\delta$. Denote by  $\cI$ the event that $R$ is in the $i$-th interval, and denote by $\cY$ the event that $D$ is in the $j$-th interval, for $1 \leq i,j \leq \frac{\pi}{2 \delta}$. Then, 
\[
 \bE[T_2]=\sum_i \sum_j \bE[T_2 | \cI \wedge \cY] \PR{\cI} \PR{\cY}~.
 \] 
 By definition, $\PR{\cY}=2\delta/\pi$, and by  Chernoff bounds, $\PR{\cI} \leq c \delta$ for  small $c > 0$. \\
 Observe that for $j \leq \frac{3\pi}{8 \delta}$, $\cos \angle (S,D) \geq 0.38$, and therefore $\bE[T_2 | \cY] =O(1)$. Moreover, if $i \geq \frac{\pi}{8 \delta}$, $\cos \angle(R_i,D) \geq 0.38$, and thus $\bE[T_2 |\cI] =O(1)$.  Assume  $i < \frac{\pi}{8 \delta}$ and $j > \frac{3\pi}{8 \delta}$. Given  events $\cI$ and $\cY$, $\angle (R,D) \leq j\delta -(i-1)\delta$, and hence $\cos \angle (R,D) \geq \cos (j-(i-1))\delta$. \\
 Therefore, given $\cI$ and $\cY$, the expected time is at most 
 $\frac{1}{k \cos \left( (j-i+1)\delta \right)}$. Thus, $$\bE[T_2] \leq O(1)+c \delta^2 \sum_{1 \leq i < \frac{\pi}{8\delta}} \sum_{j > \frac{3\pi}{8 \delta}} \frac{1}{k \cos \left( (j-i+1)\delta \right)}.$$ Writing $j=\frac{\pi}{2}-t$, we get
 \[ \bE[T_2] \leq c \delta^2 \sum_{1 \leq i < \frac{\pi}{8\delta}} \sum_{0 \leq t<\frac{\pi}{8 \delta}} \frac{1}{k \sin \left( (t+i-1)\delta \right)}~.\] 
 Using the bound
 \[ \sin ((t+i-1)\delta) \geq (t+i-1)\delta -((t+i-1)\delta)^3/6 \geq \frac{5}{6}(t+i-1)\delta~,
 \] we obtain 
 \[
 \bE[T_2] \leq c \delta \frac{6}{5k} \sum_i \sum_t \frac{1}{t+i-1}~.
 \]
 Setting $u=i-1$, we have $\bE[T_2]  \leq c' \delta \sum_{u=0}^{\frac{\pi}{8\delta}} \sum_{t=0}^{\frac{\pi}{8\delta}} \frac{1}{t+u}$ for some  $c' > 0$. \\
 The cases where $i=1$ or $t=0$ or $t=1$ all happen with probability at most $c \delta$, and since the intensity between any pair of points is at least $\delta$,  the contribution of these cases to $\bE[T_2]$ is at most $O(1)$. Hence, we exclude these cases to get 
 \begin{eqnarray}
 \bE[T_2]  &\leq& O(1)+c' \delta \sum_{u=2}^{\frac{\pi}{8\delta}} \sum_{t=2}^{\frac{\pi}{8\delta}} \frac{1}{t+u}\nonumber\\
  &\leq& O(1)+c' \delta
 \int_{u=1}^{\frac{\pi}{8\delta}} \int_{t=1}^{\frac{\pi}{8\delta}} \frac{1}{t+u} \mbox{\emph{dt du}}~.
 \end{eqnarray}
 Since,$\int_{t=1}^{\frac{\pi}{8\delta}} \frac{1}{t+u} dt= \log(\frac{\pi}{8\delta}+u)-\log(1+u)$,  we have
 $$
 \begin{array}{l}
 \int_{u=1}^{\frac{\pi}{8\delta}} \left( \log(\frac{\pi}{8\delta}+u)-\log(1+u)\right) du  \\
= \left((\frac{\pi}{8\delta}+u)\log(\frac{\pi}{8\delta}+u)-(1+u)\log(1+u)\right)|_{u=1}^{\frac{\pi}{8\delta}} \\
=\frac{\pi}{4\delta}\log \frac{\pi}{4\delta}-2(\frac{\pi}{8\delta}+1)\log(\frac{\pi}{8\delta}+1)+2\log2 \\
\leq \frac{\pi}{4\delta}(\log \frac{\pi}{4\delta}-\log(\frac{\pi}{8\delta}+1))+2\log 2 \\
\leq \frac{\pi}{4\delta} \log 2 + 2\log 2.
\end{array}
$$ 
Thus, $\bE[T_2] = O(1)+c'\delta (\frac{\pi}{4\delta}\log 2 +2 \log2)=O(1)$, and the statement of the theorem follows.  

\bigskip

{\bf Proof of Lemma \ref{lem:0hops}.} Denoting by $\alpha=\angle(S,D)$, we have $$\bE[T_{FM^0}]=\int_{0}^{\pi/2} \frac{2}{\pi}\frac{1}{k \cos \alpha + \delta} d\alpha.$$ For $\alpha > (\pi/2 - \delta)$, the intensity of the Poisson process is $\geq\delta$, and since such  value of $\alpha$ is chosen with probability  $\leq\frac{2 \delta}{\pi}$, the total contribution of this case to $\bE[T_{FM^0}]$ is $O(1)$. Hence, consider only $\alpha < (\pi/2 - \delta)$. In this case, for a suitably chosen constant $c>0$, the intensity of the Poisson  process between $S$ and $D$ is at most $c k\cos \alpha$. Thus, 
\begin{eqnarray}
\bE[T_{FM^0}] & \geq  & \int_0^{(\pi/2-\delta)} \frac{2}{\pi} \frac{1}{ck\cos \alpha} d\alpha \nonumber\\
&=&\frac{2}{c \pi k} \left(\log (\sin(\alpha/2)+\cos(\alpha/2))-\log(\cos(\alpha/2)\right.\nonumber\\
&-&\left.\sin(\alpha/2))\right)|_{\alpha=0}^{(\pi/2-\delta)}~.\nonumber
\end{eqnarray}
 Evaluating the integral, we obtain that this term is at least $$ -c'\log \left(\cos((\pi/2-\delta)/2)-\sin((\pi/2-\delta)/2)\right)$$  for some  $c' > 0$. Making a Taylor series expansion for the expression inside the logarithm around the point $\pi/4$, we see that this expression is $\delta \sin(\pi/4)+O(\delta^2)$. 

\newpage
\begin{table}
\begin{center}
\begin{tabular}{|l|c|}
\hline
Experimental data set & Infocom~06\\ \hline
Device  & iMote\\
Network type & Bluetooth\\
Duration (days)& 3\\
Granularity (sec)& 120\\
Participants with profile & 61\\
Internal contacts number& 191,336\\
Average Contacts/pair/day
& 6.7\\[1mm] \hline
\end{tabular}
\caption{Detailed information on the Infocom~06 trace.}
\label{infoTrace}
\end{center}
\end{table}

\begin{figure}[h]
\centerline{\includegraphics[width=5cm]{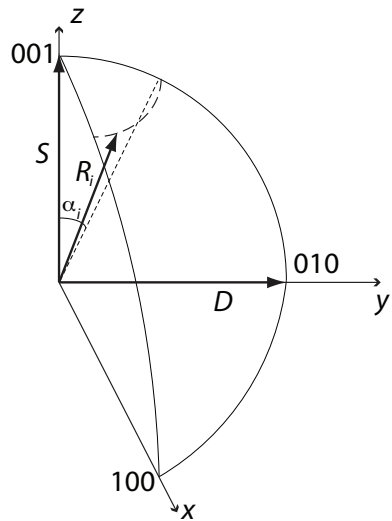}}
\caption{Node $S$ and $D$ in the unit sphere, and random choice of the angles between nodes: first, the angle $\alpha_i$ is chosen uniformly at random in $[0,\pi/2]$; then, a point $R_i$ is chosen uniformly at random in the $(m-1)$-dimensional space obtained by fixing angle $\alpha_i$ w.r.t. node $S$. In this example, we have $m=3$.}
\label{sphere}
\end{figure}

\begin{figure}[t]
\centerline{\includegraphics[width=6.5cm]{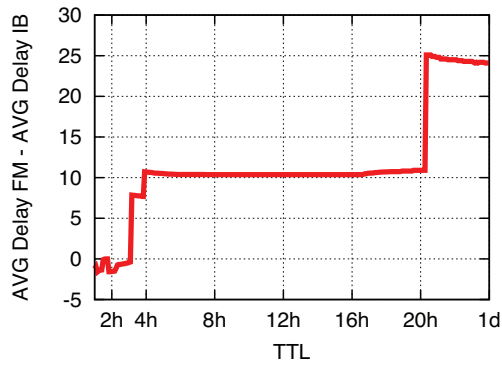}}
\caption{Difference between average packet delivery delay with FM and IB routing with the Infocom 06 trace as a function of the message TTL.
\label{traceResults}}
\end{figure}

\begin{figure}[t]
    \begin{center}
        \subfigure[IM network of 1000 nodes.]
        {
            \label{fig:IM-1000}
            \includegraphics[width=.45\textwidth]{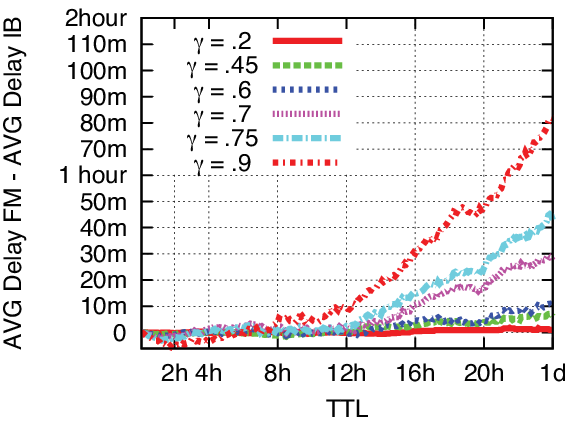}
        }
        \subfigure[IM network of 2000 nodes.]
        {
            \label{fig:IM-2000}
            \includegraphics[width=.45\textwidth]{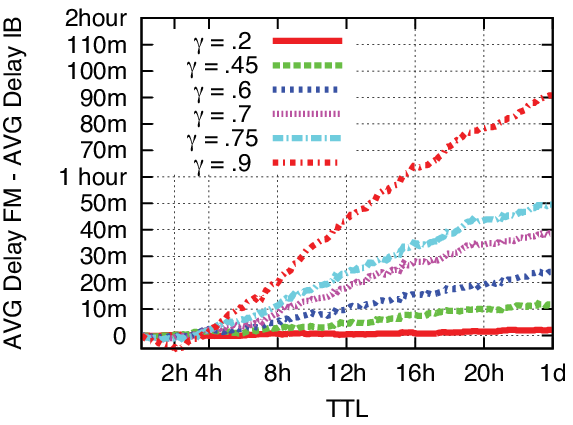}\vspace{-7mm}
        }
		\caption{Difference between average packet delivery delay with FM and IB routing with SWIM mobility in the Interest-based mobility (IM) scenario, as a function of the message TTL.}\vspace{-7mm}
        \label{fig:IM} 
    \end{center}
\end{figure}

\begin{figure}
    \begin{center}
        \subfigure[NIM network of 1000 nodes.]
        {
            \label{fig:NIM-1000}
            \includegraphics[width=.45\textwidth]{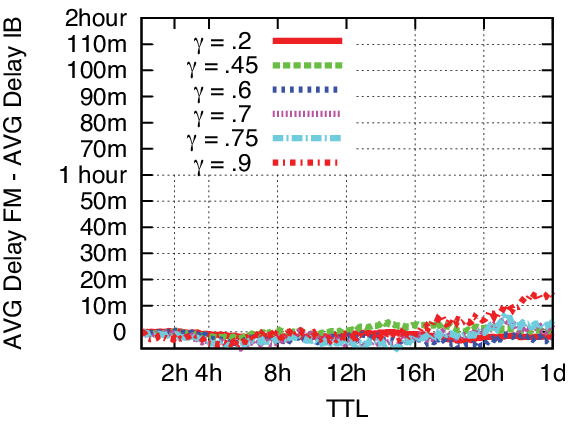}
        }
        \subfigure[NIM network of 2000 nodes.]
        {
            \label{fig:NIM-2000}
            \includegraphics[width=.45\textwidth]{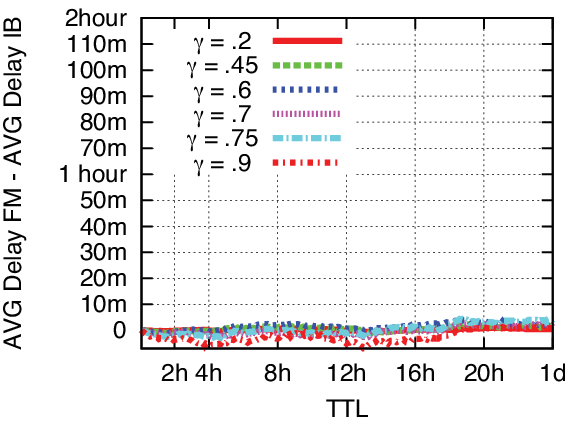}\vspace{-7mm}
        }
		\caption{Difference between average packet delivery delay with FM and IB routing with SWIM mobility in the Non Interest-based mobility (NIM) scenario, as a function of the message TTL. 		}\vspace{-11mm}
        \label{fig:NIM}
    \end{center}
\end{figure}


\begin{thebibliography}{01}

\bibitem{AlH1}A.~Al Hanbali, P.~Nain, E.~Altman, ``Performance of Ad Hoc Networks with two-hop relay routing and limited packet lifetime", {\em Performance Evaluations}, Vol. 65, n. 1-2, pp. 463--483, 2008.

\bibitem{AlonSpencer}N.~Alon, J.~Spencer, ``The Probabilistic Method", John Wiley and Sons, New York et al., 2000.

\bibitem{Boldrini}C.~Boldrini, M.~Conti, A.~Passarella, ``ContentPlace: Social-Aware Data Dissemination in Opportunistic Networks", {\em Proc. ACM MSWiM}, pp. 203--210, 2008.

\bibitem{Chain}A.~Chaintreau, P.~Hui, J.~Crowcroft, C.~Diot, R.~Gass, J.~Scott, ``Impact of Human Mobility on Opportunistic Forwarding Algorithms", {\em IEEE Transactions on Mobile Computing}, Vol. 6, n. 6, pp. 606--620, 2007.
%
\bibitem{Costa}P.~Costa, C.~Mascolo, M.~Musolesi, G.P.~Picco, ``Socially-Aware Routing for Publish-Subscribe in Delay-Tolerant Mobile Ad Hoc Networks", {\em IEEE Journal on Selected Areas in Communications}, Vol. 26, n. 5, pp. 748--760, May 2008.
%
\bibitem{Daly}E.~Daly, M.~Haahr, ``Social Network Analysis for Routing in Disconnected Delay-Tolerant MANETs", {\em Proc. ACM MobiHoc}, pp. 32--40, 2007.
%
\bibitem{Deza}M.M.~Deza, E.~Deza, {\em Encyclopedia of Distances}, Springer, Berlin, 2009.

%
\bibitem{Erram}V.~Erramilli, M.~Crovella, A.~Chaintreau, C.~Diot, ``Delegation Forwarding", {\em Proc. ACM MobiHoc}, pp. 251--259, 2008.

%
\bibitem{Fall}K.~Fall, ``A Delay-Tolerant Architecture for Challenged Internets", {\em Proc. ACM Sigcomm}, pp. 27--34, 2003.
%
\bibitem{Gao}W.~Gao, Q.~Li, B.~Zhao, G.~Cao, ``Multicasting in Delay Tolerant Networks: A Social Network Perspective", {\em Proc. ACM MobiHoc}, 2009.

\bibitem{Gross}M.~Grossglauser, D.N.C.~Tse, ``Mobility Increases the Capacity of Ad-Hoc Wireless Networks'', {\em Proc. IEEE Infocom}, pp. 1360--1369, 2001.
\bibitem{Groene}R.~Groenevelt, P.~Nain, G.~Koole, ``The Message Delay in Mobile Ad Hoc Networks'', {\em Performance Evaluation}, vol. 62, n. 1--4, pp. 210--228, 2005.
%
\bibitem{Hui}P.~Hui, J.~Crowcroft, E.~Yoneki, ``BUBBLE Rap: Social-Based Forwarding in Delay Tolerant Networks'', {\em Proc. ACM MobiHoc}, pp. 241--250, 2008.
%
\bibitem{Hui3}P.~Hui, A.~Chaintreau, J.~Scott, R.~Gass, J.~Crowcroft, C.~Diot, ``Pocket-Switched Networks and Human Mobility in Conference Environments", {\em Proc. ACM Workshop on Delay-Tolerant Networks (WDTN)}, pp. 244-251, 2005.
%
\bibitem{Ioannidis}S.~Ioannidis, A.~Chaintreau, L.~Massoulie, ``Optimal and Scalable Distribution of Content Updates over a Mobile Social Networks", {\em Proc. IEEE Infocom}, pp. 1422--1430, 2009.
\bibitem{Kara}T.~Karagiannis, J.-Y.~Le Boudec, M.~Vojnovic, ``Power Law and Exponential Decay of Inter Contact Times Between Mobile Devices", {\em Proc. ACM Mobicom}, pp. 183--194, 2007.
\bibitem{Li}F.~Li, J.~Wu, ``LocalCom: A Community-Based Epidemic Forwarding Scheme in Disruption-tolerant Networks", {\em Proc. IEEE Secon}, 2009.
%
\bibitem{McP}M.~McPherson, ``Birds of a feather: Homophily in Social Networks", {\em Annual Review of Sociology}, vol. 27, n. 1, pp. 415--444, 2001.

\bibitem{Mei}A.~Mei, J.~Stefa, ``SWIM: A Simple Model to Generate Small Mobile Worlds", {\em Proc. IEEE Infocom}, 2009.
%
\bibitem{Mei2}A.~Mei, G.~Morabito, P.~Santi, J.~Stefa, ``Social-Aware Stateless Forwarding in Pocket Switched Networks", {\em Proc. IEEE Infocom (miniconference)}, 2011.


\bibitem{Noulas}A.~Noulas, M.~Musolesi, M.~Pontil, C.~Mascolo, ``Inferring Interests from Mobility and Social Interactions, {\em Proc. ANLG Workshop}, 2009.
%
\bibitem{Resta}G.~Resta, P.~Santi, ``A Framework for Routing Performance Analysis in Delay Tolerant Networks with Application to Non-Cooperative Networks", {\em IEEE Trans. on Parallel and Distributed Systems}, to appear.
%
\bibitem{Spyro}T.~Spyropoulos, K.~Psounis, C.S.~Raghavendra, ``Efficient Routing in Intermittently Connected Mobile Networks: The Multi-copy Case'', {\em IEEE Trans. on Networking}, Vol. 16, n. 1, pp. 77--90, 2008.

\bibitem{Spyro2}T.~Spyropoulos, K.~Psounis, C.S.~Raghavendra, ``Efficient Routing in Intermittently Connected Mobile Networks: The Single-copy Case'', {\em IEEE Trans. on Networking}, Vol. 16, n. 1, pp. 63--76, 2008.
\bibitem{Vahdat}A.~Vahdat, D.~Becker, ``Epidemic Routing for Partially Connected Ad Hoc Networks'', {\em Tech. Rep. CS-200006}, Duke University, April 2000.%

\bibitem{Zhang}X.~Zhang, G.~Neglia, J.~Kurose, D.~Towsley, ``Performance Modeling of Epidemic Routing'', {\em Computer Networks}, Vol. 51, pp. 2867--2891, 2007.
\end{thebibliography}
\end{document}